\let\originalleft\left
\let\originalright\right
\renewcommand{\left}{\mathopen{}\mathclose\bgroup\originalleft}
\renewcommand{\right}{\aftergroup\egroup\originalright}
\newtheorem{theorem}{Theorem}[section]
\newtheorem{lemma}[theorem]{Lemma}
\newtheorem{fact}[theorem]{Fact}
\newtheorem{definition}[theorem]{Definition}
\newcommand{\Ot}{\ensuremath{\widetilde{O}}}
\newcommand{\eps}{\epsilon}
\newcommand{\dist}{\delta}
\newcommand{\rev}[1]{\ensuremath{#1}^{\mathrm{R}}}
\newcommand{\nil}{\ensuremath{\mathbf{nil}}}
\newcommand{\round}{\ensuremath{\mathrm{round}}}
\newcommand{\exc}{\mathrm{exc}}
\newcommand{\totexc}{\Psi}
\begin{document}

\author[1]{Adam Karczmarz\thanks{Supported by ERC Consolidator
Grant 772346 TUgbOAT and the Polish National Science Centre 2018/29/N/ST6/00757 grant.}}
\author[2]{Piotr Sankowski\thanks{Supported by ERC Consolidator Grant 772346 TUgbOAT.}}

\affil[1]{Institute of Informatics, University of Warsaw, Poland}
\affil[ ]{\texttt{a.karczmarz@mimuw.edu.pl} \medskip}

\affil[2]{Institute of Informatics, University of Warsaw, Poland}
\affil[ ]{\texttt{sank@mimuw.edu.pl}}

\title{Min-Cost Flow in Unit-Capacity Planar Graphs}
\date{}

\maketitle
  
  \begin{abstract}
  In this paper we give an $\Ot((nm)^{2/3}\log C)$ time algorithm for computing min-cost flow (or min-cost
circulation) in unit capacity planar multigraphs where edge costs are integers bounded by $C$.
  For planar multigraphs, this improves upon the best known
  algorithms for general graphs: the $\Ot(m^{10/7}\log C)$ time
algorithm of Cohen et al. [SODA 2017], the $O(m^{3/2}\log(nC))$ time algorithm of Gabow and Tarjan [SIAM J. Comput. 1989] and the $\Ot(\sqrt{n}m \log C)$ time
algorithm of Lee and Sidford [FOCS 2014].
In particular, our result constitutes the first known fully combinatorial algorithm that breaks the $\Omega(m^{3/2})$ time barrier for min-cost
flow problem in planar graphs.

To obtain our result we first give a very simple successive shortest paths based scaling
  algorithm for unit-capacity min-cost flow problem that does not explicitly
  operate on dual variables. This algorithm also runs in $\Ot(m^{3/2}\log{C})$
  time for general graphs, and, to the best of our knowledge, it has not been described before.
  We subsequently show how to implement this algorithm faster on planar graphs
  using well-established tools: $r$-divisions and efficient algorithms for computing (shortest) paths in so-called dense
  distance graphs.

\end{abstract}

\section{Introduction}
The min-cost flow is the core combinatorial optimization problem that now has been studied for over 60 years, starting with the work of Ford and Fulkerson~\cite{ff58}.
Classical combinatorial algorithms for this problem have been developed in the 80s.
Goldberg and Tarjan~\cite{GoldbergT90} showed an $\Ot(nm\log{C})$ time weakly-polynomial algorithm
for the case when edge costs are integral, and where $C$ is the maximum edge cost.
Orlin~\cite{Orlin88} showed the best-known strongly polynomial time algorithm running in $\Ot(m^2)$ time.
Faster weakly-polynomial algorithms have been developed in this century
using interior-point methods: Daitch and Spielman~\cite{DaitchS08} gave an $\Ot(m^{3/2}\log{(U+C)})$
algorithm, and later Lee and Sidford~\cite{LeeS14} obtained an $\Ot(\sqrt{n}m\log{(U+C)})$
algorithm, where $U$ is the maximum (integral) edge capacity.

Much attention has been devoted to the unit-capacity case
of the min-cost flow problem.
Gabow and Tarjan~\cite{GabowT89} gave a $O(m^{3/2}\log{(nC)})$ time algorithm.
Lee and Sidford~\cite{LeeS14} matched this bound up to polylogarithmic factors
for $m=\Ot(n)$, and improved upon it for larger densities, even though
their algorithm solves the case of arbitrary integral capacities.
Gabow and Tarjan's result remained the best known bound
for more than 28 years -- the problem
witnessed an important progress only very recently.
In 2017 an algorithm that breaks the $\Omega(m^{3/2})$ time barrier for min-cost flow problem was given by Cohen et al.~\cite{CohenMSV17}.
This algorithm runs in $\Ot(m^{10/7}\log C)$ time
and is also based on interior-point methods.

It is worth noting that currently the algorithms of \cite{CohenMSV17, LeeS14} constitute
the most efficient solutions for the entire range of possible
densities (up to polylogarithmic factors) and
are also the best-known algorithms
for important special cases, e.g., planar graphs or minor-free graphs.
Both of these solutions
are based on interior point methods and do not shed light on the combinatorial structure of the problem.

In this paper we study the unit-capacity min-cost flow in planar multigraphs.
We improve upon \cite{CohenMSV17, LeeS14} by giving the first known $\Ot((mn)^{2/3}\log C)=\Ot(m^{4/3}\log{C})$ time algorithm for computing min-cost $s,t$-flow
and min-cost circulation in planar multigraphs.\footnote{It is known that simple planar graphs have $O(n)$ edges. However,
multiple parallel edges (with possibly different costs) are useful in the unit-capacity
min-cost flow problem, as they allow us to encode larger edge capacities. Therefore, in this paper we work with planar multigraphs.} 
Our algorithm is fully combinatorial and uses the scaling approach of Goldberg and Tarjan~\cite{GoldbergT90}.
At each scale it implements the classical shortest augmenting path approach
similar to the one known from the well-known Hopcroft-Karp algorithm for maximum bipartite matching~\cite{HopcroftK73}.

\paragraph{Related work.}
Due to immense number of works on flows and min-cost flows we will not review all of them. Instead we concentrate only on the ones that are relevant to the sparse and planar
graph case, as that is the regime where our results are of importance. As already noted above the fastest algorithms for min-cost flows in planar multigraphs are implied
by the algorithms for general case. This, however, is not the case for maximum flow problem. Here, the fastest algorithms are based on planar graph duality and reduce
the problem to shortest path computations. The undirected $s,t$-flow problem can be solved in $O(n \log \log n)$ time~\cite{ItalianoNSW11}, whereas the directed $s,t$-flow problem can be
solved in $O(n \log n)$ time~\cite{BorradaileK09, Erickson10}. Even for the case with multiple source and sinks, a nearly-linear time algorithm is known~\cite{BorradaileKMNW17}.

These results naturally raise as an open question whether similar nearly-linear bounds could be possible for min-cost flow. Until very recently there
has been no progress towards answering this open question.
Partial progress was made by devising $\Ot(n^{4/3}\log{C})$ time~\cite{AsathullaKLR18} and $\Ot(n^{6/5}\log{C})$ time~\cite{LahnR19} algorithms
for min-cost perfect matchings in bipartite planar graphs.
Lahn and Raghvendra also give an $\Ot(n^{7/5}\log{C})$ time minimum cost perfect matching algorithm
for minor-free graphs. These algorithms can be seen as specialized versions of the Gabow-Tarjan's algorithm
for the assignment problem~\cite{GabowT89}.

Gabow and Tarjan~\cite{GabowT89} reduced min-cost flow problem to so-called min-cost
perfect degree-constrained subgraph problem on a bipartite multigraph,
which they solved by extending their algorithm for minimum cost perfect matching.
Hence it seems plausible that the recent algorithm of Lahn and Raghvendra~\cite{LahnR19}
can be extended to solve min-cost flow, since their algorithm builds upon the Gabow-Tarjan algorithm.
The reduction presented by Gabow and Tarjan \emph{is not} planarity preserving, though.
Nevertheless, min-cost perfect matching problem can
be reduced to min-cost flow problem in an efficient and planarity preserving way~\cite{MillerN95}.
The opposite reduction can be done in planarity preserving way as recently shown~\cite{Sankowski18}.
However, this reduction is not efficient and produces a graph of quadratic size.
Hence, we cannot really take advantage of it.
\paragraph{Overview and comparison to \cite{AsathullaKLR18, LahnR19}.} We concentrate on the \emph{min-cost circulation}
problem, which is basically the min-cost flow problem with all vertex demands equal to $0$.
It is well-known \cite{GoldbergHKT17} that the min-cost $s,t$-flow problem can be solved
by first computing some $s,t$-flow
$f$ of requested value (e.g., the maximum value), and then finding a min-cost
circulation on the residual network $G_f$.
This reduction is clearly planarity-preserving.
Since an $s,t$-flow of any given value (in particular, the maximum value) can be found in a planar graph
in nearly-linear time (see \cite{Erickson10}), this reduction works in nearly-linear time as well.

Our min-cost circulation algorithm resembles
the recent works on minimum cost planar perfect matching~\cite{AsathullaKLR18, LahnR19},
in the sense that we simulate some already-good scaling algorithm for general
graphs, but implement it more efficiently using the known and well-established tools from the area
of planar graph algorithms.
However, instead of simulating an existing unit-capacity min-cost
flow algorithm, e.g., \cite{GabowT89, GoldbergHKT17}, we
use a very simple successive-shortest paths based algorithm that,
to the best our knowledge, has not been described before.

Our algorithm builds upon the cost-scaling framework
of Goldberg and Tarjan~\cite{GoldbergT90}, similarly as the recent simple unit-capacity min-cost flow algorithms of Goldberg et al. \cite{GoldbergHKT17}.
In this framework, a notion of $\eps$-optimality of a flow
is used.
A flow $f$ is $\eps$-optimal wrt. to a price
function $p$ if for any edge $uv=e\in E(G_f)$ we have $c(e)-p(u)+p(v)\geq -\eps$.

Roughly speaking, the parameter $\eps$ measures the quality of a circulation:
any circulation is trivially $C$-optimal wrt. $p$, whereas any $\frac{1}{n}$-feasible (wrt. $p$) circulation is
guaranteed to be optimal.
The general scheme is to start with a $C$-optimal circulation, run $O(\log(nC))$
\emph{scales} that improve the quality of a circulation by a factor of $2$, and this way obtain the optimal solution.

  \newcommand{\source}{\ensuremath{s}}
  \newcommand{\sink}{\ensuremath{t}}

We show that a single scale can be solved by repeatedly sending flow along
a cheapest $\source\to\sink$ path in a certain graph $G_f''$ with a single source
$\source$ and a single sink $\sink$, that approximates the residual graph $G_f$.
Moreover, if we send flow
simultaneously along a maximal set of cheapest $\source\to\sink$ paths at once,
like in~\cite{EvenT75, HopcroftK73},
we finish after $O(\sqrt{m})$ augmentations.
However, as opposed to $\cite{EvenT75, HopcroftK73}$, our graph $G_f''$ is weighted and might have negative
edges.
We overcome this difficulty as in the classical successive shortest path
approach for min-cost flow, by using distances from the previous flow augmentation
as a feasible price function that can speed-up next shortest path computation.
Our algorithm also retains a nice property\footnote{Gabow-Tarjan algorithm for min-cost bipartite matching has a similar property, which
was instrumental for obtaining the recent results on minimum-cost planar bipartite matching~\cite{AsathullaKLR18, LahnR19}} of the Even-Tarjan algorithm that
the total length (in terms of the number of edges) of all the used augmenting
paths is $O(m\log{m})$.

The crucial difference between our per-scale procedure
and those of~\cite{GabowT89, GoldbergHKT17} is that
we do not ``adjust'' dual variables $p(v)$ at all while the procedure runs:
we only use them to compute $G_f''$, and recompute them from scratch in nearly-linear
time when the procedure finishes.
In particular, the recent results of \cite{AsathullaKLR18, LahnR19} are quite complicated
since, in order to simulate the Gabow-Tarjan algorithm~\cite{GabowT89}, they impose
and maintain additional invariants about the duals.

The only bottlenecks of our per-scale procedure are (1) shortest paths computation, (2)
picking a maximal set of edge-disjoint $s\to t$ paths in an unweighted graph\footnote{This is sometimes called \emph{the blocking flow} problem
and can be solved for unit capacities in linear time.}.

We implement these on a planar network using standard methods. Let $r\in [1,n]$ be some parameter.
We construct a \emph{dense distance graph} $H_f''$ (e.g., \cite{FR06, GawrychowskiK18})
built upon an $r$-division (e.g., \cite{KleinMS13}) of $G_f''$.
The graph $H_f''$ is a compressed representation of the distances in $G_f''$
with $O(n/\sqrt{r})$ vertices and $O(m)$ edges. 
Moreover, it can be updated in $\Ot(r)$ time per edge used by the flow.
Hence, the total time spent on updating $H_f''$ is $\Ot(mr)$.
As we show, running our per-scale procedure on $H_f''$ is sufficient to
simulate it on $G_f''$. Computing distances in a dense distance
graph requires $\Ot(n/\sqrt{r})$ time~\cite{FR06, GawrychowskiK18}.
To complete the construction, we show how to find a maximal set of edge-disjoint paths in $\Ot(n/\sqrt{r})$ amortized time.
To this end, we also exploit the properties
of reachability in a dense distance graph, used previously
in dynamic reachability algorithms for planar digraphs \cite{ItalianoKLS17, Karczmarz18}.
This way, we obtain $\Ot(\sqrt{m}n/\sqrt{r}+mr)$ running time per scale.
This is minimized roughly when $r=n^{2/3}/m^{1/3}$.

Recall that Lahn and Raghvendra~\cite{LahnR19} obtained
a polynomially better (than ours) bound of \linebreak
$\Ot(n^{6/5}\log{C})$,
but only for planar min-cost perfect matching problem.
To achieve that, they use an additional idea due to Asathulla et al. \cite{AsathullaKLR18}. Namely,
they observe that by introducing vertex weights, one
can make augmenting paths avoid edges incident to boundary vertices, thus making
the total number of pieces ``affected'' by augmenting paths
truly-sublinear in $n$.
It~is not clear how to apply this idea to the min-cost flow
problem without making additional assumptions about the structure
of the instance, like bounded-degree (then, there are only $O(n/\sqrt{r})$
edges incident to boundary vertices of an $r$-division), or
bounded vertex capacities (so that only $O(1)$ units of flow
can go through each vertex; this is satisfied in the perfect matching case).
This phenomenon seems not very surprising once we recall that
such assumptions lead to better bounds even for general graphs:
the best known combinatorial algorithms for min-cost
perfect matching run in $O(n^{1/2}m\log{(nC)})$ time, whereas
for min-cost flow in $O(m^{3/2}\log{(nC)})$ time~\cite{GabowT89, GoldbergHKT17}.

\paragraph{Organization of the paper.}
In Section~\ref{s:prelims} we introduce the notation and describe the scaling
framework of~\cite{GoldbergT90}.
Next, in Section~\ref{s:general}, we describe the per-scale procedure
of unit-capacity min-cost flow for general graphs.
Finally, in Section~\ref{s:planar} we give our algorithm for planar graphs.

%
\section{Preliminaries}\label{s:prelims}
  Let $G_0=(V,E_0)$ be the input directed multigraph.
  Let $n=|V|$ and $m=|E_0|$.
  Define $G=(V,E)$ to be a multigraph such that $E=E_0\cup\rev{E_0}$, $E_0\cap \rev{E_0}=\emptyset$, where $\rev{E_0}$ is the set of \emph{reverse edges}.
  For any $uv=e\in E$, there is an edge $\rev{e}\in E$ such that $\rev{e}=vu$ and $\rev{(\rev{e})}=e$.
  We have $e\in E_0$ iff $\rev{e}\in \rev{E_0}$.

  Let $u:E_0\to \mathbb{R}_+$ be a \emph{capacity function}.
  A \emph{flow} is a function $f:E\to \mathbb{R}$ such that for any $e\in E$
  $f(e)=-f(\rev{e})$ and for each $e\in E_0$, $0\leq f(e)\leq u(e)$.
  These conditions imply that for $e\in E_0$,
  $-u(e)\leq f(\rev{e})\leq 0$.
  We extend the function $u$ to $E$ by setting $u(\rev{e})=0$ for all $e\in E_0$.
  Then, for all edges $e\in E$ we have $-u(\rev{e})\leq f(e)\leq u(e)$.
  The \emph{unit capacity} function satisfies $u(e)=1$ for all $e\in E_0$.

  The \emph{excess} $\exc_f(v)$ of a vertex $v\in V$ is defined as $\sum_{uv=e\in E} f(e)$. Due
  to anti-symmetry of $f$, $\exc_f(v)$ is equal to the amount of flow going into $v$ by the
  edges of $E_0$ minus the amount of flow going out of $v$ by the edges of $E_0$.
  The vertex $v\in V$ is called an \emph{excess} vertex if $\exc_f(v)>0$ and \emph{deficit} if
  $\exc_f(v)<0$.   Let $X$ be the set of excess vertices of $G$ and
  let $D$ be the set of deficit vertices.
  Define the \emph{total excess} $\totexc_f$ as the sum of excesses of the excess vertices, i.e.,
  $\totexc_f=\sum_{v\in X} \exc_f(v)=\sum_{v\in D}-\exc_f(v).$

  A flow $f$ is called a \emph{circulation} if there are no excess vertices,
  or equivalently, $\totexc_f=0$.

  Let $c:E_0\to \mathbb{Z}$ be the input \emph{cost} function.
  We extend $c$ to $E$ by setting $c(\rev{e})=-c(e)$ for all $e\in E_0$.
  The \emph{cost} $c(f)$ of a flow $f$ is defined as $\frac{1}{2}\sum_{e\in E} f(e)c(e)=\sum_{e\in E_0} f(e)c(e)$.

  \emph{To send a unit of flow} through $e\in E$ means to increase $f(e)$ by $1$ and simultaneously
  decrease $f(\rev{e})$ by $1$.
  By sending a unit of flow through $e$ we increase the cost of flow by $c(e)$.
  \emph{To send a unit of flow through a path $P$} means to send a unit of flow
  through each edge of $P$. In this case
  we also say that we \emph{augment flow $f$ along path $P$}.

  The \emph{residual network} $G_f$ of $f$ is defined as $(V,E_f)$, where $E_f=\{e\in E: f(e)<u(e)\}$.
  \paragraph{Price functions and distances.}
  We call any function $p:V\to \mathbb{R}$ a \emph{price function} on $G$.
  The \emph{reduced cost} of an edge $uv=e\in E$ wrt. $p$ is defined
  as $c_p(e):=c(e)-p(u)+p(v)$.
  We call $p$ a \emph{feasible price function} of $G$ if each edge $e\in E$
  has nonnegative reduced cost wrt. $p$.

  It is known that $G$ has no negative-cost cycles (negative cycles, in short) if and only if some feasible
  price function $p$ for $G$ exists.
  If $G$ has no negative cycles, distances in $G$ (where we interpret
  $c$ as a \emph{length} function) are well-defined.
  For $u,v\in V$, we denote by $\dist_G(u,v)$ the distance between $u$ and $v$,
  or, in other words, the length of a shortest $u\to v$ path in $G$.

  \begin{fact}\label{f:distanceto}
    Suppose $G$ has no negative cycles. Let $t\in V$ be reachable in $G$ from all vertices
    $v\in V$. Then the \emph{distance to} function
    $\dist_{G,t}(v):=\dist_{G}(v,t)$ is a feasible price function of~$G$.
  \end{fact}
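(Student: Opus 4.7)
The plan is to verify the defining inequality of feasibility directly from the triangle inequality for shortest-path distances. Since $G$ has no negative cycles, distances $\dist_G(v,t)$ are well-defined, and since $t$ is reachable from every $v \in V$, the values $p(v) = \dist_{G,t}(v) = \dist_G(v,t)$ are finite real numbers, so $p$ is a bona fide function $V \to \mathbb{R}$.

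For feasibility, fix an arbitrary edge $uv = e \in E$ and show $c_p(e) = c(e) - p(u) + p(v) \geq 0$, i.e.,
\[
\dist_G(u,t) \;\leq\; c(e) + \dist_G(v,t).
\]
First I would take a shortest $v \to t$ path $Q$ in $G$ (which exists by the reachability and no-negative-cycles assumptions), of total cost $\dist_G(v,t)$. Prepending the edge $e = uv$ to $Q$ yields a walk from $u$ to $t$ of cost $c(e) + \dist_G(v,t)$. Since $\dist_G(u,t)$ is the infimum cost over $u \to t$ walks, and since the infimum is attained (no negative cycles imply it equals the minimum over simple paths), the displayed inequality follows. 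Rearranging gives $c(e) - p(u) + p(v) \geq 0$, which is exactly the feasibility condition.

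The only mild subtlety is to justify that ``distances are well-defined'' really means what we want: in the presence of (possibly) negative edge weights but no negative cycles, any walk can be shortened to a simple path of no greater cost, so $\dist_G(u,t)$ equals the minimum cost of a simple $u \to t$ path and in particular is finite and attained. This lets us legitimately compare $\dist_G(u,t)$ against the cost of the concrete $u \to t$ walk obtained by prepending $e$ to a shortest $v \to t$ path. After noting this once, the proof is a one-line triangle-inequality argument, and there is no real obstacle.
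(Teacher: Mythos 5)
Your proof is correct. The paper states this as a known fact without supplying a proof, and your argument is exactly the standard one: finiteness of $\dist_G(v,t)$ follows from reachability and absence of negative cycles, and then for any edge $uv=e$ the triangle inequality $\dist_G(u,t)\leq c(e)+\dist_G(v,t)$ (obtained by prepending $e$ to a shortest $v\to t$ path) rearranges to $c_p(e)\geq 0$.
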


  For $A,B\subseteq V(G)$ we sometimes write $\dist_G(A,B)$ to denote
  $\min\{\dist_G(u,v):u\in A, v\in B\}$.

\newcommand{\bnd}{\ensuremath{\partial}}
\newcommand{\pc}{\ensuremath{\mathcal{P}}}
  \paragraph{Planar graph toolbox.}
  An \emph{$r$-division} of a simple undirected plane graph $G$ is a collection of $O(n/r)$
  edge-induced subgraphs of $G$, called \emph{pieces}, whose union is $G$ and such
  that each piece $P$ has $O(r)$ vertices and $O(\sqrt{r})$ \emph{boundary vertices}.
  The boundary vertices $\bnd{P}$ of a piece $P$ are the vertices of $P$ shared with some other piece.

  An \emph{$r$-division with few holes} has an additional property
  that for each piece $P$, (1) $P$ is connected, (2) there exist $O(1)$ faces of $P$
  whose union of vertex sets contains $\bnd{P}$.

\newcommand{\dc}{\ensuremath{\text{DC}}}
\newcommand{\ddg}{\ensuremath{\text{DDG}}}

Let $G_1,\ldots,G_\lambda$ be some collection of plane graphs, where each $G_i$
has a distinguished boundary set $\bnd{G_i}$ lying on $O(1)$ faces of $G_i$.
A \emph{distance clique} $\dc(G_i)$ of $G_i$ is defined as a complete digraph on $\bnd{G_i}$
such that the cost of the edge $uv$ in $\dc(G_i)$ is equal to $\dist_{G_i}(u,v)$.
\begin{theorem}[MSSP~\cite{CabelloCE13, Klein05}]\label{t:mssp}
  Suppose a feasible price function on $G_i$ is given.
  Then the distance clique $\dc(G_i)$ can be computed in $O((|V(G_i)|+|E(G_i)|+|\bnd{G_i}|^2)\log{|V(G_i)|}))$ time.
\end{theorem}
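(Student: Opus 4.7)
The plan is to combine the Johnson reduction with Klein's multiple-source shortest paths (MSSP) algorithm, which (as later extended by Cabello, Chambers and Erickson) allows computing shortest paths from all vertices lying on a single face of a planar digraph in nearly-linear time.

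First I would reweight $G_i$ by replacing every edge cost $c(e)$ by its reduced cost $c_p(e)=c(e)-p(u)+p(v)$. Feasibility of $p$ guarantees $c_p(e)\geq 0$, and for any $u,v$ the shortest path distances satisfy $\dist_{G_i}(u,v)=\dist_{G_i^p}(u,v)+p(u)-p(v)$, where $G_i^p$ denotes the reweighted graph. Hence any pairwise boundary distance in $G_i$ can be recovered from the corresponding distance in $G_i^p$ in $O(1)$ arithmetic.

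Next, recall that by the ``few holes'' hypothesis the set $\bnd{G_i}$ lies on the union of vertex sets of $O(1)$ faces $F_1,\dots,F_k$ of $G_i$. For each face $F_j$ I would invoke Klein's MSSP algorithm on the plane graph $G_i^p$ after temporarily re-embedding $F_j$ as the outer face. Klein's algorithm first builds a shortest-path tree rooted at an arbitrary vertex of $F_j$ using Dijkstra in $O((|V(G_i)|+|E(G_i)|)\log|V(G_i)|)$ time (this is where nonnegativity of reduced costs is crucial), and then rotates the source around $F_j$, performing $O(|E(G_i)|)$ pivot operations that maintain the shortest-path tree in a persistent dynamic tree at amortized $O(\log|V(G_i)|)$ cost per pivot. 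At the end one obtains a persistent data structure from which, given any source vertex $s\in V(F_j)$ and any target $v\in V(G_i)$, the distance $\dist_{G_i^p}(s,v)$ can be queried in $O(\log|V(G_i)|)$ time. Running this for each of the $O(1)$ faces costs $O((|V(G_i)|+|E(G_i)|)\log|V(G_i)|)$ in total.

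Finally, to assemble $\dc(G_i)$ I would iterate over the $|\bnd{G_i}|^2$ ordered pairs $(u,v)$, pick any face $F_j$ incident to $u$, query the corresponding MSSP structure for $\dist_{G_i^p}(u,v)$, and add the price correction $p(u)-p(v)$. This contributes $O(|\bnd{G_i}|^2\log|V(G_i)|)$, which combined with the MSSP construction yields the claimed bound. The main delicate point is the Klein/CCE machinery itself, which we cite as a black box: its correctness relies on the fact that shortest-path trees in planar graphs with nonnegative edge weights admit a noncrossing embedding, so that as the source moves along a face each pivot swaps exactly one tree edge and can be implemented by a persistent link-cut tree; the feasible price function $p$ is precisely what allows us to reduce to the nonnegative-weight setting in which these structural properties hold.
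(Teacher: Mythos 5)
Your proposal is essentially the standard derivation of this cited result, and since the paper itself does not prove it (it is imported as a black box from Klein~\cite{Klein05} and Cabello--Chambers--Erickson~\cite{CabelloCE13}), there is nothing paper-internal to compare against. The two key ingredients you identify---Johnson-style reweighting via the given feasible price function to reduce to nonnegative lengths, and then one invocation of MSSP per hole (with that hole re-embedded as the outer face), followed by $O(|\bnd{G_i}|^2)$ logarithmic-time distance queries corrected by $p(u)-p(v)$---are exactly how the $O((|V|+|E|+|\bnd{G_i}|^2)\log|V|)$ bound is obtained, and the ``few holes'' hypothesis is what lets you treat the $O(1)$ MSSP calls as constant overhead. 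One small nuance worth being aware of: your description of the pivot machinery (persistent dynamic trees, one tree edge swapped per pivot, $O(|E|)$ pivots per full rotation) follows Klein's original presentation; CCE reprove and generalize this with a different internal data structure (the ``grove'' / abstract Voronoi framework), but the interface and the running-time guarantee you use are the same, so your account is a faithful sketch of the cited theorem.
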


The graph $\ddg=\dc(G_1)\cup \ldots \dc(G_\lambda)$ is called a \emph{dense distance graph}\footnote{Dense distance graphs
have been defined differently multiple times in the literature. We use the definition of \cite{GawrychowskiK18, nussbaum2014network} that captures
all the known use cases (see \cite{GawrychowskiK18} for discussion).}.
\begin{theorem}[FR-Dijkstra~\cite{FR06, GawrychowskiK18}]\label{t:fr}
  Given a feasible price function of $DDG$, single-source
  shortest paths in $\ddg$ can be computed in $O\left(\sum_{i=1}^\lambda |\bnd{G_i}|\frac{\log^2{n}}{\log^2\log{n}}\right)$ time, where $n=|V(DDG)|$.
\end{theorem}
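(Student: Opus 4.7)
The plan is to run Dijkstra's algorithm on $\ddg$, using the supplied feasible price function to make all edge costs nonnegative, but to perform the relaxations out of each distance clique $\dc(G_i)$ implicitly. A naive edge-by-edge Dijkstra would spend $\Omega\!\left(\sum_{i=1}^\lambda |\bnd{G_i}|^2\right)$ time on relaxations alone, so the entire improvement comes from batching the relaxations of each $\dc(G_i)$ through a Monge-aware data structure.

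The structural fact I would exploit is that inside a plane piece $G_i$ whose boundary vertices lie on $O(1)$ faces, shortest paths between boundary vertices are non-crossing curves, and therefore the submatrix of $\dc(G_i)$ restricted to any ordered pair of boundary faces is a Monge matrix once boundary vertices are listed in the cyclic order along their faces. For every piece $G_i$ I would maintain a data structure $\mathcal{M}_i$ supporting two operations: \textsc{activate}$(u,d(u))$, invoked when some $u\in \bnd{G_i}$ is extracted by the global Dijkstra, which incorporates the row $\dist_{G_i}(u,\cdot)+d(u)$ into a lower envelope maintained over $\bnd{G_i}$; and \textsc{min}(), returning the smallest currently achievable tentative label. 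The Monge property lets this lower envelope be maintained hierarchically on a tournament tree over $\bnd{G_i}$ in which each operation touches only $O(\log |\bnd{G_i}|)$ levels, and at each level the work can be made $O(\log n/\log\log n)$ amortized by the word-packed envelope-maintenance technique of \cite{GawrychowskiK18}.

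Globally, Dijkstra then becomes: keep a heap over the current minima of all the $\mathcal{M}_i$, repeatedly extract the smallest candidate and commit it as the distance label of some vertex $v$, and call $\mathcal{M}_j.\textsc{activate}(v,\cdot)$ for every $G_j$ containing $v$. Since each boundary vertex is activated once per piece containing it and each piece receives $O(|\bnd{G_i}|)$ activations and queries, the total cost is $O\!\left(\sum_{i=1}^\lambda |\bnd{G_i}|\cdot \log^2 n/\log^2\log n\right)$, plus an absorbed $O(n\log n)$ from the global heap. The main obstacle, and the part where the cited works differ, is the design and amortized analysis of $\mathcal{M}_i$: the envelope of activated Monge rows must admit insertions and minimum queries in polylogarithmic time while never reshuffling more than an amortized polylogarithmic portion of the tournament tree per activation, and shaving the $\log^2\log n$ factor from the original \cite{FR06} bound requires exploiting word-level parallelism on the comparisons performed at each tree node.
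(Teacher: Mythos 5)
The paper itself offers no proof of this statement: it is invoked as a black box with citations to Fakcharoenphol--Rao and Gawrychowski--Karczmarz, so there is no internal argument to compare against. Judged against the cited works, your sketch is the right one: reduce to nonnegative costs via the given price function, partition each distance clique into Monge submatrices using the non-crossing property of shortest paths between boundary vertices lying on $O(1)$ faces, batch the relaxations out of each piece with a Monge-aware envelope structure, and drive a global priority queue by the per-piece minima. Attributing the extra $\log^2\log n$ saving over the original $O\left(\sum_i |\bnd{G_i}|\log^2 n\right)$ bound to word-level parallelism inside the per-node envelope maintenance is also where \cite{GawrychowskiK18} locates it.

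Two points are glossed over and would need care in a full proof. First, the Monge decomposition is not of the whole $\dc(G_i)$ but is obtained by cutting each cyclic hole order into linear pieces and pairing them with the right orientations, giving $O(1)$ staircase/rectangular Monge blocks per pair of holes; your ``any ordered pair of boundary faces is a Monge matrix'' elides this cutting step. Second, the per-piece data structure is not merely ``insert a row, query the minimum'': it must support removing a committed column (so a settled vertex stops being re-proposed) while the set of contributing rows grows, and the amortization has to charge both row insertions and column deletions against the $O(|\bnd{G_i}|)$ budget. These are exactly the delicate parts of \cite{FR06, GawrychowskiK18}, so your framing is correct in spirit but stops at the point where the real work begins — which is reasonable for a citation sketch, and consistent with how the paper treats the theorem.
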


  \paragraph{Scaling framework for minimum-cost circulation.}

  The following fact characterizes minimum circulations.
  \begin{fact}[\cite{circcycle}]\label{f:negcycle}
    Let $f$ be a circulation. Then $c(f)$ is minimum iff $G_f$ has no negative cycles.
  \end{fact}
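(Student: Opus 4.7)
The plan is to prove both implications by contrapositive, using a standard flow-difference decomposition.

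For the easy direction ($\Leftarrow$), suppose $G_f$ contains a cycle $C$ with $c(C)<0$. Since every edge of $C$ is residual, I can augment $f$ by sending one unit of flow around $C$, producing a new function $f'$. Antisymmetry and the capacity bounds $-u(\rev{e})\leq f(e)\leq u(e)$ are preserved (because residuality guarantees strict slack on the forward side of each edge of $C$), and conservation at every vertex is preserved because each vertex of $C$ has one incoming and one outgoing edge of $C$; hence $f'$ is again a circulation. Its cost satisfies $c(f')=c(f)+c(C)<c(f)$, so $f$ is not a minimum-cost circulation.

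For the harder direction ($\Rightarrow$), I argue the contrapositive: assume $f$ is not optimal and pick some circulation $f^*$ with $c(f^*)<c(f)$. Define $g:E\to\mathbb{R}$ by $g(e):=f^*(e)-f(e)$. Then $g$ is antisymmetric and satisfies flow conservation at every vertex, so $g$ is itself a ``generalized'' circulation in $G$, though without a priori capacity bounds. The key observation is that $g$ is supported on residual edges of $G_f$: if $g(e)>0$ then $f(e)<f^*(e)\leq u(e)$, so $e\in E_f$.

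Now I decompose $g$ into cycles. Restrict attention to edges $e$ with $g(e)>0$; by the conservation property of $g$ and the standard flow-decomposition argument (iteratively peel off simple directed cycles in the support), $g$ can be written as a nonnegative combination $g=\sum_i \alpha_i \mathbf{1}_{C_i}$, where each $C_i$ is a simple directed cycle contained in $G_f$ and $\alpha_i>0$. Taking costs,
\[
\sum_i \alpha_i c(C_i) \;=\; \sum_{e\in E_0} g(e) c(e) \;=\; c(f^*)-c(f) \;<\;0,
\]
so some $C_i$ must have $c(C_i)<0$, exhibiting a negative cycle in $G_f$.

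The only delicate step is the cycle decomposition in the second direction, because $g$ is a signed function rather than an ordinary nonnegative flow; the standard resolution is to work with the nonnegative ``forward half'' $\{e\in E_0: g(e)>0\}\cup\{\rev{e}: e\in E_0, g(e)<0\}$, on which $g$ is a genuine $s$-free nonnegative circulation that admits the usual inductive peeling. Everything else is a direct calculation using antisymmetry of $c$ and $f$. I would cite the classical reference already given as \cite{circcycle} for the decomposition if I wanted to shorten this, but the self-contained version above fits in a few lines.
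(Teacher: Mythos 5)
Your proof is mathematically sound and is the standard textbook argument; the paper itself does not prove this fact but simply cites \cite{circcycle}, so there is no in-paper proof to compare against. Two small remarks. First, you have swapped the direction labels: what you call ``$\Leftarrow$'' (negative cycle exists $\Rightarrow$ $f$ not minimum) is the contrapositive of the $\Rightarrow$ direction of the stated equivalence, and what you call ``$\Rightarrow$'' is the contrapositive of $\Leftarrow$. The proofs themselves cover both directions, so this is purely a labeling slip. Second, in the augmentation step, ``sending one unit'' is safe in the paper's unit-capacity integral setting, but residuality $f(e)<u(e)$ only gives strict positive slack, not slack at least $1$, in the general real-valued setting; the clean fix is to send $\delta=\min_{e\in C}\left(u(e)-f(e)\right)>0$ units and observe $c(f')=c(f)+\delta\,c(C)<c(f)$. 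The cycle-decomposition half is correct: you rightly identify that $g=f^*-f$ is antisymmetric and conserving, that its positive support is contained in $E_f$, that the in-degree/out-degree balance at each vertex on the positive support follows from conservation plus antisymmetry, and that the cost identity $\sum_i\alpha_i c(C_i)=\sum_{e\in E_0}g(e)c(e)=c(f^*)-c(f)<0$ forces some $C_i$ to be negative.
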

  It follows that a circulation $f$ is minimum if there exists
  a feasible price function of $G_f$.

\begin{definition}[\cite{BertsekasT88, GoldbergT90, Tardos85}]
    A flow $f$ is $\eps$-optimal wrt. price function $p$ if
    for any $uv=e\in E_f$, $c(e)-p(u)+p(v)\geq -\eps.$
  \end{definition}
  The above notion of $\eps$-optimality allows us, in a sense,
  to measure the optimality of a circulation: the smaller $\eps$,
  the closer to the optimum a circulation $f$ is.
  Moreover, if we deal with integral costs, $\frac{1}{n+1}$-optimality
  is equivalent to optimality.

  \begin{restatable}[\cite{BertsekasT88, GoldbergT90}]{lemma}{lscale}\label{l:scale}
    Suppose the cost function has integral values. Let circulation
    $f$ be $\frac{1}{n+1}$-optimal wrt. some price function $p$.
    Then $f$ is a minimum cost circulation.
  \end{restatable}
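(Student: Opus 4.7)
The plan is to invoke \Cref{f:negcycle}: it suffices to show that if $f$ is $\frac{1}{n+1}$-optimal wrt.\ $p$ and costs are integral, then $G_f$ contains no negative cycle. So I would argue by contradiction, assuming that some directed cycle $C=v_0v_1\ldots v_k v_0$ in $G_f$ has $c(C)<0$.

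The first key observation is that reduced costs telescope around any cycle: since each price $p(v_i)$ appears once with a plus sign and once with a minus sign when summing $c_p(e) = c(e) - p(u) + p(v)$ along $C$, we get
\[
    \sum_{e \in C} c_p(e) \;=\; \sum_{e \in C} c(e) \;=\; c(C).
\]
The second observation is that since $c$ is integer-valued, a negative cycle must satisfy $c(C) \leq -1$. Combining these gives $\sum_{e \in C} c_p(e) \leq -1$.

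On the other hand, by $\frac{1}{n+1}$-optimality of $f$ wrt.\ $p$, every edge $e \in E_f$ (and in particular every edge of $C$) satisfies $c_p(e) \geq -\frac{1}{n+1}$. Since $C$ is a simple cycle in a graph on $n$ vertices, $|C| \leq n$, so
\[
    \sum_{e \in C} c_p(e) \;\geq\; -\frac{|C|}{n+1} \;\geq\; -\frac{n}{n+1} \;>\; -1,
\]
contradicting $\sum_{e \in C} c_p(e) \leq -1$. Hence $G_f$ has no negative cycles, and by \Cref{f:negcycle} the circulation $f$ is of minimum cost.

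I do not anticipate a serious obstacle; the only subtlety is to notice that a negative cycle in a graph with integer costs has cost at most $-1$, which is exactly what lets the gap $\frac{1}{n+1}$ suffice, and to remember that a simple cycle uses at most $n$ edges so that the bound $-\frac{n}{n+1}$ stays strictly greater than $-1$.
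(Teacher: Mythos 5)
Your proof is correct and follows essentially the same route as the paper: invoke \Cref{f:negcycle}, note that reduced costs telescope around a cycle, bound $\sum_{e\in C} c_p(e) \geq -\frac{n}{n+1} > -1$ for a simple cycle via $\frac{1}{n+1}$-optimality, and contrast this with integrality of $c(C)$. The only cosmetic difference is that the paper phrases the contradiction as ``the integral cost must be $\geq 0$'' while you phrase it as ``a negative integral cost must be $\leq -1$''; these are the same observation, and you should just make explicit up front (as the paper does) that the negative cycle can be taken to be simple, which is what licenses $|C|\leq n$.
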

  \begin{proof}
    Suppose $f$ is not minimum-cost.
    By Fact~\ref{f:negcycle}, $f$ is not minimum-cost iff $G_f$ contains
    a simple negative cycle $C$.
    Note that the cost of $C$ is the same wrt. to the cost functions
    $c$ and $c_p$, as the prices cancel out.
    Therefore $\sum_{e\in C}c_p(e)\geq -\frac{n}{n+1}>-1$.
    But the cost of this cycle is integral and hence is at least $0$,
    a contradiction.
  \end{proof}

    Let $C=\max_{e\in E_0}\{|c(e)|\}$.
  Suppose we have a procedure $\textsc{Refine}(G,f_0,p_0,\eps)$ that, given a circulation $f_0$ in $G$ that is $2\eps$-optimal wrt. $p_0$,
  computes a pair $(f',p')$ such that $f'$ is a circulation in $G$, and it is $\eps$-optimal wrt. $p'$.
  We use the general \emph{scaling} framework, due to Goldberg and Tarjan~\cite{GoldbergT90}, as given in Algorithm~\ref{alg:scaling}.
  By Lemma~\ref{l:scale}, it computes a min-cost circulation in $G$ in $O(\log(nC))$ iterations.
  Therefore, if we implement $\textsc{Refine}$ to run in $T(n,m)$ time,
  we can compute a minimum cost circulation in $G$ in $O(T(n,m)\log{(nC)})$ time.

\begin{algorithm}[]
  \begin{algorithmic}[1]
    \Procedure{MinimumCostCirculation}{$G$}
  \State $f(e):=0$ for all $e\in G$
  \State $p(v):=0$ for all $v\in V$
  \State $\eps:=C/2$
  \While{$\eps>\frac{1}{n+1}$}\Comment{$f$ is $2\eps$-optimal wrt. $p$}
    \State $(f,p):=\Call{Refine}{G,f,p,\eps}$
    \State $\eps:=\eps/2$
  \EndWhile
    \State \Return $f$ \Comment{$f$ is circulation $\frac{1}{n+1}$-optimal wrt. $p$, i.e., a minimum-cost circulation}
  \EndProcedure
\end{algorithmic}
  \caption{Scaling framework for min-cost circulation.\label{alg:scaling}}
\end{algorithm}
  \section{Refinement via Successive Approximate Shortest Paths}\label{s:general}
  In this section we introduce our implementation of $\textsc{Refine}(G,f_0,p_0,\eps)$.
  For simplicity, we start by setting $c(e):=c(e)-p_0(u)+p_0(v)$.
  After we are done, i.e., we have a circulation $f'$ that is $\eps$-optimal wrt. $p'$,
  (assuming costs reduced with $p_0$), we will return $(f',p'+p_0)$ instead.
  Therefore, we now have $c(e)\geq -2\eps$ for all $e\in E_{f_0}$.

  Let $f_1$ be the flow initially obtained from $f_0$ by sending a unit of
  flow through each edge $e\in E_{f_0}$ such that $c(e)<0$.
  Note that $f_1$ is $\eps$-optimal, but
  it need not be a circulation.

  We denote by $f$ the \emph{current flow} which we will gradually
  change into a circulation.
  Recall that $X$ is  the set of excess vertices of $G$ and
  $D$ is the set of deficit vertices (wrt. to the current flow $f$).
  Recall a well-known method of finding the min-cost circulation exactly~\cite{succsp1, succsp2, succsp3}:
  repeatedly send flow through shortest $X\to D$ paths in $G_f$.
  The sets $X$ and $D$ would only shrink in time.
  However, doing this on $G_f$ exactly would be too costly.
  Instead,
  we will gradually convert $f$ into a circulation,
  by sending flow from vertices of $X$ to vertices of $D$
  but only using approximately (in a sense) shortest paths.

  Let $\round(y,z)$ denote the smallest integer multiple of $z$ that is greater than $y$.

 For any $e\in E$, set
$c'(e)=\round(c(e)+\eps/2,\eps/2).$
  We define $G_f'$ to be the ``approximate'' graph $G_f$ with the costs given by $c'$ instead of $c$.

  For convenience, let us also define an extended version $G_f''$ of $G_f'$ to be
  $G_f'$ with two additional vertices $\source$ (a super-excess-vertex) and $\sink$ (a super-deficit-vertex) added.
  Let $M=\sum_{e\in E}|c'(e)|+\eps$.
  We also add to  $G_f''$ the following auxiliary edges:
  \begin{enumerate}
  \item an edge $v\sink$ for all $v\in V$,
      we set $c'(v\sink)=0$ if $v\in D$ and $c'(v\sink)=M$ otherwise,
  \item an edge $\source x$ with $c'(\source x)=0$ for all $x\in X$.
  \end{enumerate}
  Clearly, $\dist_{G_f''}(\source,\sink)=\dist_{G_f'}(X,D)$ and every vertex in $G_f''$ can reach $\sink$.

Our algorithm can be summarized very briefly, as follows. Start with $f=f_1$.
  While $X\neq \emptyset$, send a unit of flow along any shortest
  path $P$ from $X$ to $D$ in $G_f'$ (equivalently: from $\source$ to $\sink$ in $G_f''$).
  Once finished, return $f$ and $\dist_{G_f'',\sink}$ as the price function.
  The correctness of this approach follows from the following two facts
  that we discuss later on:
  \begin{enumerate}[label={(\arabic*)}]
    \item $G_f'$ is negative-cycle free at all times,
    \item after the algorithm finishes, $f$ is a circulation in $G$ that is $\eps$-optimal wrt. $\dist_{G_f'',\sink}$.
  \end{enumerate}

  If implemented naively, the algorithm would need $O(m)$ negative-weight shortest
  paths computations to finish.
  If we used Bellman-Ford method for computing shortest paths, the algorithm
  would run in $O(nm^2)$ time.
  To speed it up, we apply two optimizations.

  First, as in the successive shortest paths algorithm for general
  graphs~\cite{succsp4, succsp5}, we observe that the distances $\dist_{G_f'',\sink}$ computed
  before sending flow through a found shortest $\source\to\sink$ path
  constitute a feasible price function of $G_f''$ \emph{after} augmenting the flow.
  This allows us to replace Bellman-Ford algorithm with Dijkstra's algorithm
  and reduce the time to $O(m^2+nm\log{n})$.
  Next, instead of augmenting the flow along a single shortest $X\to D$ path,
  we send flow through a maximal set of edge-disjoint shortest $X\to D$ paths,
  as in Hopcroft-Karp algorithm for maximum bipartite matching~\cite{HopcroftK73}.
  Such a set can be easily found in $O(m)$ time when the distances to $\sink$ in $G_f''$ are known.
  This way, we finish after only $O(\sqrt{m})$ phases of shortest path computation and flow augmentation.
  The pseudocode is given in Algorithm~\ref{alg:refine}.

\begin{algorithm}[h]
  \begin{algorithmic}[1]
\Require{$f_0$ is a circulation in $G$ $2\eps$-feasible wrt. $p_0$}
    \Require{$\Call{DistancesTo}{H,t,p}$ computes the vector of distances (i.e., $\dist_{G,t}$) from all $v\in V(H)$ to $t\in V(H)$,
    where $p$ is a feasible price function of $H$.}
    \Require{$\Call{SendFlow}{f,E^*}$ returns a flow $f'$ such that $f'(e)$ equals $f(e)+1$ if $e\in E^*$,
    $f(e)-1$ if $\rev{e}\in E^*$, and $f(e)$ otherwise.}
    \Ensure{$(f,p)$, where $f$ is a circulation in $G$ $\eps$-feasible wrt. $p$}
    \Procedure{Refine}{$G,f_0,p_0,\eps$}
    \State $c(e):=c(e)-p_0(u)+p_0(v)$ for all $e=uv\in E$.
    \State $f:=\Call{SendFlow}{f_0,\{e\in E_{f_0}:c(e)<0\}}$
  \State $p(v):=0$ for all $v\in V$
    \While{$X\neq 0$}\Comment{$p$ is a feasible price function of $G_f''$}
    \State Construct $G_f''$ out of $G_f'$.
    \State $p:=\Call{DistancesTo}{G_f'',\sink,p}$\label{l:dijkstra}
    \State $Q_0,\ldots,Q_k:=$ a maximal set of edge-disjoint $\source\to\sink$ paths in $G_f''$ consisting solely \linebreak\phantom{very sorry for thishack}
    of edges satisfying $c'_p(e)=0$.\label{l:augment}
    \State $f:=\Call{SendFlow}{f,E((Q_0\cup\ldots\cup Q_k)\cap G_f')}$
  \EndWhile
    \State \Return $(f,\Call{DistancesTo}{G_f'',\sink,p}+p_0)$ \Comment{$f$ is $\eps$-feasible wrt. $\dist_{G_f'',\sink}+p_0$}
  \EndProcedure
\end{algorithmic}
  \caption{Refinement via successive shortest paths.\label{alg:refine}}
\end{algorithm}
\subsection{Analysis}
Below we state some key properties of our refinement method.
     \begin{lemma}\label{l:peq}
      Suppose $G_f''$ has no negative cycles. Then $f$ is $\eps$-optimal wrt. $\dist_{G_f'',\sink}$.
     \end{lemma}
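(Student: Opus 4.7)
The plan is to apply Fact~\ref{f:distanceto} to $G_f''$ and then exploit how close $c'$ is to $c$.

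First, I would check that Fact~\ref{f:distanceto} is applicable. By assumption $G_f''$ has no negative cycles, and the construction of $G_f''$ includes an edge $v\sink$ for every $v\in V$, so $\sink$ is reachable from every vertex of $G_f''$ (the text even notes this explicitly). Hence $p:=\dist_{G_f'',\sink}$ is a feasible price function of $G_f''$, i.e., $c'_p(e)=c'(e)-p(u)+p(v)\ge 0$ for every $uv=e\in E(G_f'')$.

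Second, I would observe that the residual set $E_f$ (defined with respect to the original cost function $c$) is contained in $E(G_f')$, which in turn is a subset of $E(G_f'')$ — the only edges present in $G_f''$ but absent from $G_f'$ are the auxiliary edges incident to $\source$ or $\sink$, which do not lie in $E_f$. Consequently, the inequality $c'_p(e)\ge 0$ from the previous paragraph holds in particular for every $uv=e\in E_f$.

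Third, I would translate the bound from $c'$ to $c$. By the definition $c'(e)=\round(c(e)+\eps/2,\eps/2)$, i.e., $c'(e)$ is the smallest multiple of $\eps/2$ strictly larger than $c(e)+\eps/2$, so $c(e)+\eps/2<c'(e)\le c(e)+\eps$, and in particular $c(e)\ge c'(e)-\eps$. Combining this with $c'_p(e)\ge 0$ gives, for every $uv=e\in E_f$,
\[
c(e)-p(u)+p(v)\;\ge\;(c'(e)-\eps)-p(u)+p(v)\;=\;c'_p(e)-\eps\;\ge\;-\eps,
\]
which is exactly the $\eps$-optimality of $f$ with respect to $p=\dist_{G_f'',\sink}$.

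There is no real obstacle here: the proof is essentially three substitutions. The only points that need careful bookkeeping are the containment $E_f\subseteq E(G_f'')$ (so that the feasibility of $p$ on $G_f''$ can be transferred to $E_f$) and unpacking the $\round$ definition to obtain $c'(e)\le c(e)+\eps$, which pays the additive slack of $\eps$ in the final inequality.
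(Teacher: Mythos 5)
Your proof is correct and follows essentially the same route as the paper's: apply Fact~\ref{f:distanceto} to $G_f''$ to get $c'_p(e)\ge 0$ on residual edges, then use $c(e)\ge c'(e)-\eps$ to pay the $\eps$ slack. The extra bookkeeping you include (checking reachability of $\sink$, the containment $E_f\subseteq E(G_f'')$, and unpacking the $\round$ definition) is sound and just makes explicit what the paper leaves implicit.
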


  \begin{proof}
    Recall that $G_f$ and $G_f'$ have the same sets of edges, only different costs.
    Let $uv=e\in G_f$.
    Set $p:=\dist_{G_f'',\sink}$.
    By Fact~\ref{f:distanceto}, $c'(e)-p(u)+p(v)\geq 0$.
    Note that $c(e)\geq c'(e)-\eps$.
    Hence, $c(e)-p(u)+p(v)\geq c'(e)-p(u)+p(v)-\eps\geq -\eps.$
  \end{proof}
\begin{restatable}{lemma}{lpath}\label{l:path}
  If $X\neq \emptyset$,
  then there exists a path from $X$ to $D$ in $G_f$.
\end{restatable}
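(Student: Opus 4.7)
The plan is a standard cut/reachability argument. Let $S$ denote the set of vertices reachable from $X$ in $G_f$; I want to show $S$ must contain a vertex of $D$ whenever $X \neq \emptyset$.

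First I would observe that the total excess and total deficit are equal: applying the definition of $\exc_f$ with $S = V$ shows $\sum_{v \in V}\exc_f(v) = 0$, so $X \neq \emptyset$ implies $D \neq \emptyset$ and $\totexc_f > 0$. This uses that $f_0$ is a circulation and that the operations performed so far (sending flow on edges of negative reduced cost, and eventually along $X \to D$ paths) preserve the identity $\sum_v \exc_f(v) = 0$ by antisymmetry.

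Next I would bound the total excess inside $S$. By telescoping (edges with both endpoints in $S$ cancel by antisymmetry $f(e) = -f(\rev{e})$),
\[
\sum_{v \in S}\exc_f(v) \;=\; \sum_{\substack{uv \in E\\ u \notin S,\, v \in S}} f(uv).
\]
For any such edge $uv$ with $u \notin S$ and $v \in S$, the reverse edge $vu$ cannot lie in $E_f$: otherwise $u$ would be reachable from $v$, and hence from $X$, contradicting $u \notin S$. Thus $f(vu) \geq u(vu) \geq 0$, so by antisymmetry $f(uv) \leq 0$. Therefore $\sum_{v \in S}\exc_f(v) \leq 0$.

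Finally, since $X \subseteq S$ trivially, I split the sum as $\sum_{v \in S} \exc_f(v) = \totexc_f + \sum_{v \in S \setminus X}\exc_f(v)$. With $\totexc_f > 0$ and the left side non-positive, we get $\sum_{v \in S \setminus X}\exc_f(v) < 0$, so some $d \in S \setminus X$ satisfies $\exc_f(d) < 0$, i.e.\ $d \in D$. By definition of $S$, there is a path from some $x \in X$ to $d$ in $G_f$, which is the desired conclusion. No step is genuinely difficult here; the only point requiring mild care is correctly reading off from the antisymmetric flow/capacity conventions that $vu \notin E_f$ forces $f(uv) \leq 0$.
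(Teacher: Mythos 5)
Your proof is correct, but it takes a genuinely different route from the paper. The paper reasons on the subgraph $G_f^+ = (V, \{e \in E : f(e) < f_0(e)\})$ of $G_f$: it takes $Q$ to be the set of vertices reachable from $X$ in $G_f^+$, notes that $D \cap Q = \emptyset$ would force $\sum_{v \in Q}\exc_f(v) = \totexc_f > 0$, and then invokes \cref{l:goldberg} (the Goldberg--Hed--Kaplan--Tarjan bound $\sum_{v\in C}\exc_f(v) \leq |\{ab \in E_f^+ : a \in C, b \notin C\}|$) to exhibit an edge of $G_f^+$ leaving $Q$, a contradiction. You instead work directly on $G_f$: you take $S$ to be the set of vertices reachable from $X$ in $G_f$, observe that any edge $uv$ with $u \notin S$, $v \in S$ must have $\rev{uv} \notin E_f$ (so $f(\rev{uv}) \geq u(\rev{uv}) \geq 0$, hence $f(uv) \leq 0$), and conclude $\sum_{v \in S}\exc_f(v) \leq 0$, which forces a deficit vertex in $S \setminus X$. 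Both are standard cut arguments and both are valid. Your version is self-contained (it avoids the black-box lemma and the detour through $G_f^+$), at the price of working out the sign of the crossing flow by hand; the paper's version is shorter here precisely because it reuses \cref{l:goldberg}, which it needs again in the proof of \cref{l:bound}, so the cost of introducing that lemma is amortized. The only phrasing to tighten is your remark that preservation of $\sum_v \exc_f(v) = 0$ ``uses that $f_0$ is a circulation'': in fact $\sum_v \exc_f(v) = \sum_{e\in E} f(e) = 0$ holds for every antisymmetric $f$, independently of $f_0$, so no appeal to $f_0$ being a circulation is needed.
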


In order to prove Lemma~\ref{l:path}, we need the following Lemma of Goldberg et al.~\cite{GoldbergHKT17}.
\begin{lemma}[\cite{GoldbergHKT17}]\label{l:goldberg}
  Define $G^+_f=(V,E^+_f)$, where $E^+_f=\{e\in E: f(e)<f_0(e)\}.$
  Then for any $C\subseteq V$,
  $$\sum_{v\in C} \exc_f(v)\leq |\{ab=e\in E^+_f:a\in C, b\notin C\}|.$$
\end{lemma}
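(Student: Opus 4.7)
The plan is to prove the inequality by manipulating the excess as a flow of the difference $g := f - f_0$, exploiting anti-symmetry to cancel internal contributions, and then using unit capacities to bound each boundary term by $1$.

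First, I would observe that since $f_0$ is a circulation, $\exc_{f_0}(v) = 0$ for every $v$, so
\[
  \exc_f(v) \;=\; \exc_f(v) - \exc_{f_0}(v) \;=\; \sum_{uv = e \in E} (f(e) - f_0(e)) \;=\; \sum_{uv = e \in E} g(e),
\]
where $g(e) := f(e) - f_0(e)$. Note that $g$ inherits the anti-symmetry property $g(\rev{e}) = -g(e)$ from $f$ and $f_0$.

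Next, I would sum over $v \in C$ and group edges by whether their endpoints lie in $C$ or its complement. Every edge $e = uv$ with both $u, v \in C$ contributes $g(e)$ from the sum for $v$ and $g(\rev{e}) = -g(e)$ from the sum for $u$, so all internal edges cancel. What remains is
\[
  \sum_{v \in C} \exc_f(v) \;=\; \sum_{\substack{uv = e \in E \\ u \notin C,\, v \in C}} g(e).
\]
Applying anti-symmetry once more (each edge $e = uv$ entering $C$ corresponds to $\rev{e}$ leaving $C$, with $g(e) = -g(\rev{e})$), this rewrites as
\[
  \sum_{v \in C} \exc_f(v) \;=\; \sum_{\substack{ab = e \in E \\ a \in C,\, b \notin C}} \bigl(f_0(e) - f(e)\bigr).
\]

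Finally, I would bound the right-hand side termwise. For each edge $e = ab$ with $a \in C$, $b \notin C$: if $f_0(e) - f(e) \leq 0$, that term contributes nothing positive. Otherwise $f(e) < f_0(e)$, so $e \in E^+_f$ by definition, and in the unit-capacity setting $f(e), f_0(e) \in \{0,1\}$ (if $e \in E_0$) or $\{-1,0\}$ (if $e \in \rev{E_0}$), so $f_0(e) - f(e) \leq 1$. Thus each such edge contributes at most $1$, and only edges in $E^+_f$ contribute at all, yielding
\[
  \sum_{v \in C} \exc_f(v) \;\leq\; \bigl|\{ab = e \in E^+_f : a \in C,\, b \notin C\}\bigr|,
\]
as required. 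The only subtle step is the anti-symmetry flip converting an entering-edge sum into a leaving-edge sum, which is what makes the right-hand set correctly count edges from $C$ to $V \setminus C$ rather than the reverse; everything else is bookkeeping plus the unit-capacity bound.
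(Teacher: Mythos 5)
Your proof is correct. The paper itself gives no proof of this lemma---it is imported as a black box from Goldberg et al.~\cite{GoldbergHKT17}---so there is no in-paper argument to compare against; your difference-flow argument (antisymmetric $g=f-f_0$, cancellation of edge pairs internal to $C$, reversal bijection to turn the entering-edge sum into a leaving-edge sum, then a per-edge bound of $1$ on crossing edges) is the standard route and is exactly what is needed. Two minor remarks: (i) the step $\exc_{f_0}(v)=0$ uses that $f_0$ is a circulation, which the lemma statement leaves implicit but which is guaranteed by the specification of \textsc{Refine}, so you are right to flag it; (ii) integrality of $f$ and $f_0$ is not actually needed for the final bound: for $e\in E_0$ the capacity constraints give $0\le f(e),f_0(e)\le 1$, hence $f_0(e)-f(e)\le 1$, and symmetrically $-1\le f(e),f_0(e)\le 0$ for $e\in\rev{E_0}$, so each crossing edge contributes at most $1$ even for fractional flows.
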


\begin{proof}[Proof of Lemma~\ref{l:path}]
  Observe that $G^+_f$ is a subgraph of $G_f$.
  It is hence enough to prove that there exists a $X\to D$ path in $G^+_f$.
  Let $Q$ be the set of vertices reachable from any vertex of $X$ in $G^+_f$.
  If $D\cap Q=\emptyset$, then $\sum_{v\in Q} \exc_f(v)=\sum_{v\in X}\exc_f(v)=\totexc_f\geq |X|>0$.
  By Lemma~\ref{l:goldberg}, there exists an edge $e=ab$ in $G^+_f$
  such that $a\in Q$ and $b\notin Q$.
  Hence $b$ is reachable from $X$ and $b\notin Q$, a contradiction.
\end{proof}

Before we proceed further, we need to introduce more notation.
Let $\Delta$ denote the length of the shortest $X\to D$
path in $G_f'$ ($\Delta$ changes in time along with $f$).

  Let $q=\totexc_{f_1}$. Clearly, $q\leq m$. For $i=1,\ldots,q$, denote by
  $f_{i+1}$ the flow (with total excess $q-i$) obtained from $f_i$
  by sending a unit of flow through an arbitrarily chosen shortest $X\to D$ path $P_i$ of $G_{f_i}'$.
  
  For $i=1,\ldots,q$, let $\Delta_i$ be the value $\Delta$ when $f=f_i$. We set $\Delta_{q+1}=\infty$.

    \begin{restatable}{lemma}{lcorrect}\label{l:correct}
      Let $p^*_i:V\cup\{\source,\sink\}\to \{k\cdot \eps/2:k\in \mathbb{Z}\}$ be defined as
    $p^*_i=\dist_{G_{f_i}'',\sink}$. Then:
    \begin{enumerate}[label={(\arabic*)}]
      \item $G_{f_{i}}'$ has no cycles of non-positive cost,
      \item for any $e\in P_i$, the reduced cost of $\rev{e}$ wrt. $p^*_i$ is positive,
      \item $p_i^*$ is a feasible price function of both $G_{f_i}''$ and $G_{f_{i+1}}''$,
      \item $0<\Delta_i\leq \Delta_{i+1}$.
    \end{enumerate}
    \end{restatable}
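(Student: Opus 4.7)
My plan is to prove (1)–(4) by simultaneous induction on $i$. The key algebraic input will be the identity
\[
c'(e) + c'(\rev{e}) \geq \tfrac{3}{2}\eps
\]
for every antiparallel pair, which I will establish once and then reuse throughout. Since $\round(y,\eps/2) > y$, one gets $c'(e) > c(e) + \eps/2$ and $c'(\rev{e}) > -c(e) + \eps/2$, summing to $c'(e) + c'(\rev{e}) > \eps$; as both terms lie in $\tfrac{\eps}{2}\mathbb{Z}$, the sum is at least $\tfrac{3}{2}\eps$. I also note in passing that all edge costs in $G_f''$ (real edges through $c'$, auxiliary ones $0$ and $M$) are multiples of $\eps/2$, which yields the type statement $p_i^* \in (\tfrac{\eps}{2}\mathbb{Z})^{V\cup\{\source,\sink\}}$.

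For the base case $i=1$, a short residual-edge case analysis of the effect of saturating every $e \in E_{f_0}$ with $c(e) < 0$ shows that every edge remaining in $E_{f_1}$ satisfies $c(e) \geq 0$, so every real edge of $G_{f_1}'$ has $c'(e) \geq \eps$; this directly gives (1). Since $P_1$ is a shortest $\source\to\sink$ path and $p_1^*=\dist_{G_{f_1}'',\sink}$, every $e=uv\in P_1$ satisfies $p_1^*(u)-p_1^*(v)=c'(e)$, so the reduced cost of $\rev{e}$ equals $c'(\rev{e})+c'(e)\geq \tfrac{3}{2}\eps$, establishing (2). For (3), Fact~\ref{f:distanceto} handles $G_{f_1}''$; for $G_{f_2}''$, only reverses of $P_1$-edges are new, and these have positive reduced cost by (2). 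Part (4) follows because any $\source\to\sink$ path must traverse at least one real residual edge (cost $\geq \eps$), so $\Delta_1\geq \eps>0$, while the feasibility just established yields $\Delta_2\geq p_1^*(\source)-p_1^*(\sink)=\Delta_1$.

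The main obstacle lies in proving (1) in the inductive step. Assuming (1)–(4) for $i$, I will take any cycle $C$ in $G_{f_{i+1}}'$ and split $C=C_{\mathrm{old}}\cup C_{\mathrm{new}}$ with $C_{\mathrm{old}}=C\cap E_{f_i}$. Because unit-capacity augmentation along $P_i$ only replaces each used edge by its reverse, $C_{\mathrm{new}}\subseteq\{\rev{e}:e\in P_i\}$. Since prices cancel on a cycle, $c'(C)=\sum_{e\in C}c'_{p_i^*}(e)$, and (3) for $i$ makes every summand over $C_{\mathrm{old}}$ nonnegative, while (2) for $i$ forces every summand over $C_{\mathrm{new}}$ to be at least $\tfrac{3}{2}\eps$. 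If $C_{\mathrm{new}}\neq\emptyset$ we immediately obtain $c'(C)\geq \tfrac{3}{2}\eps>0$; otherwise $C$ lies entirely in $G_{f_i}'$ and is strictly positive by (1) for $i$. This is the delicate step because it is precisely where the strict positivity produced by the rounding definition of $c'$ has to interact with the feasibility carried forward by induction.

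The remaining conclusions for $i+1$ will be routine. Part (2) is obtained verbatim from the base-case argument applied to $P_{i+1}$ and $p_{i+1}^*$, again invoking $c'(e)+c'(\rev{e})\geq \tfrac{3}{2}\eps$. For (3), feasibility of $p_{i+1}^*$ on $G_{f_{i+1}}''$ follows from (1) for $i+1$ via Fact~\ref{f:distanceto}, and the extension to $G_{f_{i+2}}''$ is handled exactly as in the base case, noting that the only new edges are reverses of $P_{i+1}$, for which (2) for $i+1$ applies. Finally, (4) for $i+1$ combines $\Delta_{i+1}>0$ (which is immediate from (4) for $i$, since $0<\Delta_i\leq \Delta_{i+1}$) with $\Delta_{i+1}\leq \Delta_{i+2}$, which follows because feasibility of $p_{i+1}^*$ on $G_{f_{i+2}}''$ forces $\Delta_{i+2}\geq p_{i+1}^*(\source)-p_{i+1}^*(\sink)=\Delta_{i+1}$.
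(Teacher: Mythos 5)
Your proof is correct and follows essentially the same inductive template as the paper's, but you streamline one step in a way worth noting. The paper proves an auxiliary item (5), namely that $G_{f_{i+1}}'$ has no zero-cost cycles, and obtains ``no negative cycles'' separately from the feasibility of $p_i^*$ on $G_{f_{i+1}}''$; only then does it combine these two facts to deduce item (1) for the next index. You instead compute the cost of an arbitrary cycle $C\subseteq G_{f_{i+1}}'$ directly after subtracting the telescoping prices, split $C$ into edges already present in $G_{f_i}'$ (nonnegative reduced cost by (3)) and edges reversed by $P_i$ (strictly positive reduced cost by (2)), and fall back to (1) for $i$ when $C$ contains no reversed edge. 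This one calculation absorbs both the ``no negative cycle'' and the ``no zero-cost cycle'' arguments of the paper into a single display, which is slightly cleaner. Your sharpening $c'(e)+c'(\rev{e})\geq \tfrac{3}{2}\eps$ via the $\tfrac{\eps}{2}\mathbb{Z}$-integrality is correct but unnecessary; the paper's Fact~\ref{f:revbound}, which gives only $c'(e)+c'(\rev{e})>\eps$, already suffices since all that is used is strict positivity.

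One small point you should make explicit when passing from feasibility of $p_i^*$ on $G_{f_{i+1}}'$ to feasibility on $G_{f_{i+1}}''$: besides the reversed $P_i$-edges, the auxiliary edges also change. Some edges $\source x$ disappear (when $x$ leaves $X$), which cannot break feasibility, and for vertices $d$ leaving $D$ the cost of $d\sink$ jumps from $0$ to $M$, which only increases the reduced cost and hence also preserves feasibility. You only mention the reversed edges, so as written the argument has a (benign) gap there. The paper spells this out in one sentence, and you should too.
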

  \begin{proof}
    We proceed by induction on $i$. We also prove that (5) $G_{f_{i+1}}'$ has no
    $0$-cost cycles.

    Consider item (1).
    If $i=1$, then $G_{f_1}'$ has positive cost edges only so it cannot have
    non-positive cost cycles.
    Otherwise, if $i>1$, then by the inductive hypothesis $p^*_{i-1}$ is a feasible price
    function for $G_{f_i}''$, so $G_{f_i}''$ has no negative cycles.
    By item (5) for $i-1$, we also have that $G_{f_i}$ has no $0$-cost cycles.

    By (1) applied for $i$, $G_{f_i}''$ has no negative cycles,
    and thus distances in $G_{f_i}''$ are well-defined
    and so is $p^*_i$.
    Since the edge weights of $G_f''$ are integer multiples of $\eps/2$,
    all distances in this graph are like that as well.
    We conclude that the values of $p^*_i$ are indeed multiples of $\eps/2$.

    Note that $M$ is large enough so that
    a path $v\to \sink$ in $G_{f_i}''$ uses an in-edge of $\sink$ with weight $M$
    if and only if $v$ cannot reach $D$ in $G_{f_i}'$.
    Hence, if $v$ can reach $D$ in $G_{f_i}'$, 
    we have $p_i^*(v)=\dist_{G_{f_i}''}(v,\sink)=\dist_{G_{f_i}'}(v,D).$

    We now prove (2) and (3). By Fact~\ref{f:distanceto}, $p^*_i$ is a feasible price function of $G_{f_i}''$
    before sending flow through $P_i$.
    To prove that $p^*_i$ is a feasible price function of $G_{f_{i+1}}'$,
    we only need to consider the reduced costs of edges $\rev{e}$,
    where $uv=e\in P_i$.
    Since $e$ is on a shortest path from $X$ to $D$ in $G_{f_i}'$, by (2) and Fact~\ref{f:revbound} we have
    $$p_i^*(v)=\dist_{G_{f_i}'}(v,D_i)=\dist_{G_{f_i}'}(u,D_i)-c'(e)=p_i^*(u)-c'(e)<p^*_i(u)+c'(\rev{e})-\eps.$$
    Equivalently
    $c'(\rev{e})-p^*_i(v)+p^*_i(u)>\eps$
    so indeed $p^*_i$ remains feasible for $G_f'$ after sending flow through $P_i$.
    To see that $p^*_i$ is feasible for $G_{f_{i+1}}''$, note that in comparison
    to $G_{f_i}''$, $G_{f_{i+1}}''$
    has less auxiliary edges $sx$ and for one auxiliary edge $dt$ its cost
    is increased from $0$ to $M$.

    We clearly have $\Delta_1>0$ since $G_{f_1}'$ has positive edges only.
    Note that by Lemma~\ref{l:path}, $\Delta_i$ is finite, whereas $\Delta_{q+1}=\infty$.
    Hence, $\Delta_i\leq \Delta_{i+1}$ holds for $i=q$.
    Suppose $i<q$ and $\Delta_i>\Delta_{i+1}$.
    We have $p_i^*(\source)=\Delta_i$ and $p_i^*(\sink)=0$.
    By (3), $p_i^*$ is a feasible price function for $G_{f_{i+1}}''$, so
    $$0\leq \dist_{G_{f_{i+1}}''}(\source,\sink)-p^*_i(\source)+p^*_i(\sink)=\Delta_{i+1}-\Delta_i<0$$
    a contradiction. We have proved (4).

    Finally, we prove that (5) $G_{f_{i+1}}'$ has no $0$-cost cycles.
    Note that the cost of any cycle in $G_{f_i}'$ is preserved if we reduce the edge
    costs with $p^*_i$.
    Recall from (3) that the edge costs reduced by $p^*_i$ are all non-negative both in $G_{f_i}'$
    and $G_{f_{i+1}}'$.
    Hence all $0$-length cycles in $G_{f_{i+1}}'$
    consist solely of edges of reduced (with $p_i^*$) cost $0$.
    But $G_{f_{i+1}}'$ is obtained from $G_{f_i}'$ by replacing
    some edges with reduced cost $0$ with reverse
    edges with positive reduced cost.
    No such edge can thus lie on a $0$-cost cycle in $G_{f_{i+1}}'$.
    A $0$-cost cycle without an edge of $E(G_{f_{i+1}}')\setminus E(G_{f_i}')$
    cannot exist, as it would also exist in $G_{f_i}$ which would contradict (1).
  \end{proof}

By Lemmas~\ref{l:path}~and~\ref{l:correct}, our general algorithm
computes a circulation $f_{q+1}$ such that $p_q^*$ is a feasible
price function of $G_{f_{q+1}}'$.
Since $f_{q+1}$ has no negative cycles, by Lemma~\ref{l:peq}, $f_{q+1}$ is $\eps$-optimal wrt. $\dist_{G_f'',\sink}$.
We conclude that the algorithm is correct.

The following lemma is the key to the running time analysis.
\begin{restatable}{lemma}{lbound}\label{l:bound}
  If $X\neq\emptyset$ (equivalently, if $\Delta<\infty$), then
  $\totexc_f \cdot \Delta\leq 6\eps m.$
\end{restatable}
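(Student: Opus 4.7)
The plan is to exhibit $\totexc_f$ edge-disjoint $X$-to-$D$ paths lying in $G_f'$ whose $c'$-lengths sum to (essentially) at most $3\eps m$; since every $X$-to-$D$ path in $G_f'$ has $c'$-length at least $\Delta$, summing will give $\totexc_f\cdot\Delta\leq 6\eps m$ with room to spare.

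These paths come from a standard flow decomposition of $\phi:=f_0-f$, viewed as an antisymmetric function on $E$. Since $f_0$ is a circulation, the excess of $\phi$ at each vertex $v$ equals $-\exc_f(v)$, so $X$ is the source set and $D$ the sink set of $\phi$. The positive part $\phi^+$ is a $0/1$ flow supported on $G^+_f=\{e:f(e)<f_0(e)\}$, and $G^+_f\subseteq G_f$. Flow decomposition therefore writes $\phi^+$ as the sum of indicators of $\totexc_f$ edge-disjoint $X$-to-$D$ paths $P^*_1,\ldots,P^*_{\totexc_f}$ together with the indicators of some edge-disjoint cycles, all of whose edges lie in $G^+_f$. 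Each $P^*_j$ is an $X$-to-$D$ path in $G_f'$, so $c'(P^*_j)\geq \Delta$; and by Lemma~\ref{l:correct}(1) each decomposition cycle has strictly positive $c'$-cost. Combining,
$$\totexc_f\cdot \Delta \;\leq\; \sum_{j} c'(P^*_j) \;\leq\; \sum_{e\in G^+_f} c'(e).$$

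It remains to establish the per-edge bound $c'(e)\leq 3\eps$ for every $e\in G^+_f$, which, together with $|G^+_f|\leq m$, closes the argument. By definition of rounding, $c'(e)\leq c(e)+\eps$, so it suffices to prove $c(e)\leq 2\eps$. In the unit-capacity setting $f(e)<f_0(e)$ forces one of two cases: either $e\in E_0$ with $f_0(e)=1$, in which case $\rev{e}\in E_{f_0}$; or $e=\rev{e'}$ for some $e'\in E_0$ with $f_0(e')=0$, in which case $e'\in E_{f_0}$. In either case, the $2\eps$-optimality of $f_0$ with respect to $p_0$ (which, after the reduction $c\leftarrow c-p_0$ performed at the start of \textsc{Refine}, simply says $c(\cdot)\geq -2\eps$ on $E_{f_0}$), applied to $\rev{e}$ or $e'$ respectively, yields $c(e)\leq 2\eps$ by antisymmetry.

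The only subtlety is bookkeeping for the antisymmetric decomposition of $\phi$ — in particular, checking that the decomposition paths really run from $X$ to $D$ and that the decomposition cycles lie in $G_f'$ so Lemma~\ref{l:correct}(1) applies. Once those are in place the remainder is a routine unit-capacity case analysis.
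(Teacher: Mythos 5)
Your proof is correct, and it takes a genuinely different route from the paper's. The paper goes through a level-set (cut) argument: invoking Lemma~\ref{l:goldberg}, it shows $\totexc_f \le |E^+(z)|$ for every threshold $z \in (0,\Delta]$ with $E^+(z)$ the set of $E^+_f$-edges crossing the level cut at~$z$; then it bounds the \emph{reduced} cost drop $p(u)-p(v) \le 3\eps$ for $uv \in E^+_f$ (combining the $2\eps$-optimality of $f_0$ with the $\eps$-optimality of the current $f$ via Lemma~\ref{l:peq}), so each $E^+_f$-edge crosses at most $6$ of the $O(\Delta/\eps)$ levels, and a pigeonhole over levels produces a cut of size $O(m\eps/\Delta)$. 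You instead decompose $f_0 - f$ directly into $\totexc_f$ edge-disjoint $X\to D$ paths plus cycles supported on $E^+_f \subseteq E_f$, bound $c'(e) \le 3\eps$ on $E^+_f$ \emph{pointwise} (only needing $2\eps$-optimality of $f_0$ plus the $\le\eps$ rounding slack), use Lemma~\ref{l:correct}(1) to discard the cycles, and sum. Your route is shorter and more elementary — it bypasses Lemma~\ref{l:goldberg}, bypasses the $\eps$-optimality of the current flow, and even yields the sharper constant $3\eps m$. The paper's cut-based argument, on the other hand, is structured so that it generalizes cleanly to the vertex-weighted variant of $\eps$-optimality used in the (omitted) extension for small-min-degree graphs, where the per-edge bound $c'(e)$ becomes weight-dependent and a careful cut count over weighted levels is more natural; for the unweighted lemma as stated, your decomposition proof is arguably preferable.
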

\begin{proof}
  This proof is inspired by the proof of an analogous fact in~\cite{GoldbergHKT17}.

  Suppose $f=f_i$ for some $i$, $1\leq i\leq q$ and let $p:=p_i^*$,
  where $p_i^*$ is as in Lemma~\ref{l:correct}.
  Moreover, by Lemma~\ref{l:correct}, $p(x)\geq \Delta$ for all $x\in X$
  and $p(d)\leq 0$ for all $d\in D$, where $\Delta>0$.

  Let $uv=e\in E^+$.
  Then since $e\in E_f$,
  and $f$ is $\eps$-optimal wrt. $p$ (by Lemma~\ref{l:peq}),
  $c(e)-p(u)+p(v)\geq -\eps$.
  Equivalently, $p(u)-p(v)\leq c(e)+\eps=-c(\rev{e})+\eps$.
  But since $\rev{e}\in E_{f_0}$, $-c(\rev{e})\leq 2\eps$ and
  we obtain $$p(u)-p(v)\leq 3\eps.$$

  For a number $z$, define $L(z)=\{v\in V: p(v)\geq z\}$.
  We have $X\subseteq L(z)$ for any $z\leq\Delta$ and $D\cap L(z)=\emptyset$ for any $z>0$.
  Consequently, for any $0< z\leq \Delta$
  we have $$\sum_{v\in L(z)} \exc_f(v)=\sum_{v\in X} \exc_f(v)=\totexc_f.$$
  Let $E^+(z)=\{uv=e\in E^+:p(u)\geq z, p(v)<z\}=\{uv=e\in E^+:u\in L(z), v\notin L(z)\}$.
  By Lemma~\ref{l:goldberg}, for any $0< z\leq \Delta$,
  $\totexc_f\leq |E^+(z)|.$

  For a particular edge $e\in E^+$, the condition $p(u)\geq z, p(v)<z$, by $z>p(v)\geq p(u)-3\eps$,
  is equivalent to $z\in (p(u)-3\eps,p(u)]$.
  Consider the sets $$E^+(\eps/2), E^+(\eps/2), E^+(2\cdot \eps/2),\ldots, E^+(k\cdot \eps/2), \ldots, E^+(\Delta).$$
  Since each edge $e\in E^+$ belongs to at most $6$ of these sets,
  the sum of their sizes is at most $12m$.
  Hence, for some $z^*$, where $0< z^*\leq \Delta$, $|E^+(z^*)|\leq \frac{6m\eps}{\Delta}$.
  We conclude $\totexc_f \Delta\leq 6m\eps$.
  \end{proof}

\subsection{Efficient Implementation}
As mentioned before, we could use Lemma~\ref{l:correct} directly:
start with flow $f_1$ and $p_0^*\equiv 0$.
Then, repeatedly compute a shortest $X\to D$ path $P_i$ along
with the values $p_i^*$ using Dijkstra's
algorithm on $G_f''$ (with the help of price function $p_{i-1}^*$ to make the
edge costs non-negative), and send flow through $P_i$ to obtain $f_{i+1}$.
However, we can also proceed as in Hopcroft-Karp algorithm
and augment along many shortest $X\to D$ paths of cost $\Delta$ at once.
We use the following lemma.
\begin{restatable}{lemma}{lincr}\label{l:incr}
  Let $p$ be a feasible price function of $G_f''$.
  Suppose there is no
  $\source\to\sink$ path in $G_f''$ consisting of edges with reduced (wrt. $p$) cost $0$.
  Then $\Delta=\dist_{G_f'}(X,D)>p(\source)-p(\sink).$
\end{restatable}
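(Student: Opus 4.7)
The plan is to use the standard telescoping identity that relates original and reduced-cost path lengths, and then exploit the hypothesis that the zero-reduced-cost subgraph contains no $\source\to\sink$ path.

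First I would recall that for any $\source\to\sink$ path $P$ in $G_f''$, the definition of reduced cost immediately gives the telescoping identity
\begin{equation*}
\sum_{e\in P} c'(e) \;=\; \sum_{e\in P} c'_p(e) \;+\; p(\source) - p(\sink),
\end{equation*}
since the intermediate $p$-values along $P$ cancel. Taking the minimum over all $\source\to\sink$ paths (which is well defined because $p$ is feasible, so $G_f''$ has no negative cycles by Fact~\ref{f:distanceto}-style reasoning) yields
\begin{equation*}
\dist_{G_f''}(\source,\sink) \;=\; (p(\source)-p(\sink)) \;+\; \min_{P\colon \source\to\sink}\, \sum_{e\in P} c'_p(e).
\end{equation*}

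Next I would observe, as already remarked in the construction of $G_f''$, that $\dist_{G_f''}(\source,\sink)=\dist_{G_f'}(X,D)=\Delta$, because the auxiliary edges out of $\source$ and into the deficit vertices $D$ all have cost $0$, while the auxiliary edges into $\sink$ from non-deficit vertices have cost $M$ large enough never to appear on a shortest path as long as some $X\to D$ path exists.

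It then remains to show that the minimization term on the right is strictly positive. Since $p$ is feasible for $G_f''$, every edge has $c'_p(e)\geq 0$, so $\sum_{e\in P} c'_p(e)\geq 0$ for every $\source\to\sink$ path. By hypothesis, no $\source\to\sink$ path consists solely of edges with $c'_p(e)=0$, so every such path contains at least one edge with strictly positive reduced cost, making $\sum_{e\in P} c'_p(e) > 0$. Combining this with the displayed equation gives $\Delta > p(\source)-p(\sink)$, as claimed. The argument is essentially a one-liner once the telescoping identity is set up; the only mildly delicate point is confirming that the identification $\dist_{G_f''}(\source,\sink)=\Delta$ remains valid, which is handled by the choice of $M$ and the assumption $X\neq\emptyset$ (so that a finite-cost $\source\to\sink$ path exists, via Lemma~\ref{l:path}).
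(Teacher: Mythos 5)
Your proof is correct and takes essentially the same approach as the paper: telescoping the reduced costs along a shortest $\source\to\sink$ path (which exists by Lemma~\ref{l:path}), identifying $\dist_{G_f''}(\source,\sink)$ with $\Delta$, and using the hypothesis to force at least one strictly positive term. Your version is in fact slightly cleaner in phrasing the argument as a minimization over all paths rather than fixing a single path up front.
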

\begin{proof}
  By Lemma~\ref{l:path}, there exists some $X\to D$ path
  in $G_{f'}$ and hence also a $\source\to\sink$ path $e_1\ldots e_k$ in $G_f''$.
  Let $e_i=u_iu_{i+1}$, where $u_1=\source$ and $u_{k+1}=\sink$.
  We have $c'(e_i)-p(u_i)+p(u_{i+1})\geq 0$ for all $i$
  but for some $j$ we also have $c'(e_j)-p(u_j)+p(u_{j+1})>0$.
  So,
  $$\Delta=\dist_{G_f'}(X,D)=\dist_{G_f''}(\source,\sink)\geq \sum_{i=1}^k c'(e_i)=p(\source)-p(\sink)+\sum_{i=1}^k (c'(e_i)-p(u_i)+p(u_{i+1}))>p(\source)-p(\sink).\qedhere$$
\end{proof}

Suppose we run the simple-minded algorithm. Assume that at some point $f=f_i$, and we have $p_i^*$ computed.
Any $\source\to\sink$ path in $G_{f_i}''$ with reduced (wrt. $p_i^*$) cost $0$
corresponds to some shortest $X\to D$ path (of length $\Delta_i$)
in $G_f'$.
Additionally, we have $p_i^*(\source)=0$ and $p_i^*(\sink)=\Delta_i$.

Let $Q_0,\ldots,Q_k$ be some maximal set of edge-disjoint $\source\to\sink$ paths in $G_{f_i}''$ with reduced
cost $0$.
By Lemma~\ref{l:correct}, we could in principle choose $P_i=Q_0, P_{i+1}=Q_1, \ldots, P_{i+k}=Q_k$
and this would not violate the rule that we repeatedly choose
shortest $X\to D$ paths.

Moreover, $p_i^*$ is a feasible price function of $G_{f_{i+1}}''$ for
any choice of $P_i=Q_j$, $j=0,\ldots,k$.
Hence, the reduced cost wrt. $p_i^*$ of any $\rev{e}\in Q_j$, is non-negative.
Therefore, in fact $p_i^*$ is a feasible price
function of all $G_{f_{i+1}}'',G_{f_{i+2}}'',\ldots,G_{f_{i+k+1}}''$.
On the other hand, since for all $e\in P_i\cup\ldots\cup P_{i+k}$,
the reduced cost (wrt. $p_i^*$) of $\rev{e}$ is positive,
and the set $Q_0,\ldots,Q_k$ was maximal, we conclude
that there is no $\source\to\sink$ path in $G_{f_{i+k+1}}''$ consisting only
of edges with reduced cost (wrt. $p_i^*$) $0$.
But $p_i^*(s)-p^*(t)=\Delta_i$, so by Lemma~\ref{l:incr} we have
$\Delta_{i+k+1}>\Delta_i$.

Since we can choose a maximal set $Q_0,\ldots,Q_k$ using a DFS-style procedure in $O(m)$
time (for details, see Section~\ref{s:pathcycle}, where we take a closer look at it
to implement it faster in the planar case),
we can actually move from $f_i$ to $f_{i+k+1}$ and simultaneously
increase $\Delta$ in $O(m)$ time.
Since $p_i^*$ is a feasible price function of $G_{f_{i+k+1}}''$,
the new price function $p_{i+k+1}^*$ can be computed, again, using Dijkstra's algorithm.
The total running time of this algorithm is $O(m+n\log{n})$ times the number of
times $\Delta$ increases.
\begin{lemma}\label{l:change}
  The value $\Delta$ changes $O(\sqrt{m})$ times.
\end{lemma}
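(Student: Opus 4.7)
The plan is to mimic the standard Hopcroft--Karp analysis, using Lemma~\ref{l:bound} in place of the usual matching-length/augmenting-path count. The argument splits the execution at a carefully chosen threshold $T$ for $\Delta$: before $\Delta$ reaches $T$, the fact that $\Delta$ lives on a coarse grid of multiples of $\eps/2$ forces only a few distinct values; after $\Delta$ reaches $T$, Lemma~\ref{l:bound} forces the total excess (and hence the remaining number of phases) to be small.

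First I would observe that $c'(e)$ is an integer multiple of $\eps/2$ for every $e$ by the definition of $\round(\cdot,\eps/2)$, and consequently every finite $\Delta$ is a positive integer multiple of $\eps/2$. Combined with monotonicity $\Delta_i\leq \Delta_{i+1}$ from Lemma~\ref{l:correct}(4) and the strict-increase guarantee between consecutive phases established via Lemma~\ref{l:incr}, this gives that after the $k$-th distinct value of $\Delta$ is attained, $\Delta\geq k\cdot \eps/2$.

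Next, fix the threshold $T=\eps\sqrt{m}$. The number of phases in which $\Delta\leq T$ is at most the number of multiples of $\eps/2$ in $(0,T]$, which is $2\sqrt{m}$. For any phase with $\Delta>T$, Lemma~\ref{l:bound} gives
\[
\totexc_f\leq \frac{6\eps m}{\Delta}<\frac{6\eps m}{T}=6\sqrt{m}.
\]
Since every phase augments along at least one $\source\to\sink$ path and hence strictly decreases $\totexc_f$ by at least one, there can be at most $6\sqrt{m}$ further phases. Summing, the number of phases---and so the number of times $\Delta$ changes---is $O(\sqrt{m})$.

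The only real subtlety is making sure that what I am calling a ``phase'' corresponds exactly to a single value of $\Delta$: within one phase we augment along a \emph{maximal} set of $0$-reduced-cost $\source\to\sink$ paths computed with respect to the same $p_i^*$, so $\Delta$ is constant throughout the phase, and Lemma~\ref{l:incr} (applied to the residual graph after the phase, using $p_i^*$ which remains feasible by Lemma~\ref{l:correct}(3)) guarantees a strict increase between consecutive phases. Once that is nailed down, the rest is the two-bucket counting above, so I do not expect any serious obstacle.
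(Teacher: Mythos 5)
Your proof is correct and follows essentially the same two-bucket argument as the paper: count the at most $2\sqrt{m}$ distinct values of $\Delta$ on the grid of multiples of $\eps/2$ below the threshold $\eps\sqrt{m}$, then invoke Lemma~\ref{l:bound} to cap $\totexc_f$ by $6\sqrt{m}$ once $\Delta$ exceeds it, so that each subsequent phase (which strictly decreases $\totexc_f$) adds at most $6\sqrt{m}$ more changes. The paper's write-up is terser but relies on exactly the same ingredients (monotonicity and $\eps/2$-granularity from Lemma~\ref{l:correct}, the excess bound from Lemma~\ref{l:bound}).
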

\begin{proof}
  By Lemma~\ref{l:correct}, $\Delta$ can only increase, and if it does, it increases by at least $\eps/2$.
  After it increases $2\sqrt{m}$ times, $\Delta\geq \eps\sqrt{m}$.
  But then, by Lemma~\ref{l:bound}, $\totexc_f$ is no more than $6\sqrt{m}$.
  As each change of $\Delta$ is accompanied with some decrease of $\totexc_f$,
  $\Delta$ can change $O(\sqrt{m})$ times~more.
\end{proof}

\begin{theorem}
  $\textsc{Refine}$ as implemented in Algorithm~\ref{alg:refine} runs in $O((m+n\log{n})\sqrt{m})$.
\end{theorem}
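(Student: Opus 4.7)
The plan is to bound the number of outer iterations of the while-loop in Algorithm~\ref{alg:refine} and then bound the cost of a single iteration. By Lemma~\ref{l:change}, the value $\Delta$ changes only $O(\sqrt{m})$ times throughout the entire execution of $\textsc{Refine}$. I would first argue that each execution of the while-loop body corresponds to exactly one increase of $\Delta$: by the discussion preceding Lemma~\ref{l:change}, after augmenting along a maximal set $Q_0,\ldots,Q_k$ of edge-disjoint zero-reduced-cost $\source\to\sink$ paths, no such path remains in the new graph, so Lemma~\ref{l:incr} (applied with the already-computed $p_i^*$, which remains feasible for the updated residual) forces $\Delta$ to strictly increase in the next iteration. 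Hence the number of iterations is $O(\sqrt{m})$.

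Next I would analyze the per-iteration cost. There are three operations to account for. First, the shortest-path computation on line~\ref{l:dijkstra}: since $p$ is a feasible price function of $G_f''$ at the start of the iteration (established as an invariant by Lemma~\ref{l:correct}(3)), one can reduce the edge costs using $p$ to make all edges of $G_f''$ non-negative and run Dijkstra's algorithm with Fibonacci heaps, obtaining $\dist_{G_f'',\sink}$ in $O(m+n\log n)$ time. Note that $G_f''$ has $n+2$ vertices and $O(m)$ edges (we added at most $2n$ auxiliary edges incident to $\source$ and $\sink$). Second, finding a maximal collection $Q_0,\ldots,Q_k$ of edge-disjoint $\source\to\sink$ paths using only tight (zero-reduced-cost) edges can be done by a standard DFS-based procedure in $O(m)$ time: restrict $G_f''$ to its tight edges and greedily peel off $\source\to\sink$ paths, removing each traversed edge. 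Third, $\Call{SendFlow}{}$ updates $f$ along $O(m)$ edges and runs in $O(m)$ time.

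Summing the three contributions, a single iteration costs $O(m+n\log n)$. Multiplying by the $O(\sqrt{m})$ iteration bound gives the claimed total running time of $O((m+n\log n)\sqrt{m})$. The initial preprocessing (reducing costs by $p_0$, building $f$ from $f_0$) and the final $\Call{DistancesTo}{}$ call add only an additional $O(m+n\log n)$, which is absorbed. The only real subtlety is justifying that exactly one iteration occurs per distinct value of $\Delta$; everything else is bookkeeping. I would therefore devote the bulk of the proof to that argument, citing Lemmas~\ref{l:correct} and~\ref{l:incr} as above, and leave the per-iteration cost analysis as a short paragraph.
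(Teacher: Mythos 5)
Your proof is correct and matches the paper's own argument: both bound the number of phases by $O(\sqrt{m})$ via Lemma~\ref{l:change} (justified by the feasibility of $p_i^*$ for the post-augmentation residual graph and Lemma~\ref{l:incr}), and bound the per-phase cost by $O(m+n\log n)$ for Dijkstra plus $O(m)$ for the DFS-style path extraction and flow update. The bookkeeping details you fill in (size of $G_f''$, absorption of the initial/final work) are exactly what the paper leaves implicit.
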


We can in fact improve the running time to $O(m\sqrt{m})$ by taking advantage of so-called Dial's
implementation of Dijkstra's algorithm~\cite{Dial69}.
The details can be found in Appendix~\ref{s:dial}.

\subsection{Bounding the Total Length of Augmenting Paths.}
\begin{fact}\label{f:revbound}
  For every $e\in E$ we have $c'(e)+c'(\rev{e})>\eps$.
  \end{fact}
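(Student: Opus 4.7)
The plan is to unfold the definition of $c'$ and use the strict-inequality part of $\round$. Recall $c'(e)=\round(c(e)+\eps/2,\eps/2)$, where $\round(y,z)$ is defined as the smallest integer multiple of $z$ \emph{strictly greater} than $y$; in particular $\round(y,z)>y$ always holds. Applied to our setting, this yields the strict bound
\[c'(e)\;>\;c(e)+\eps/2.\]

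Next, I would invoke the antisymmetry convention for costs, namely $c(\rev{e})=-c(e)$, which holds for all $e\in E$ by the extension of $c$ to reverse edges. Plugging in, the same reasoning gives
\[c'(\rev{e})\;=\;\round(c(\rev{e})+\eps/2,\eps/2)\;=\;\round(-c(e)+\eps/2,\eps/2)\;>\;-c(e)+\eps/2.\]

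Finally, I would just add the two strict inequalities; the $c(e)$ terms cancel, leaving $c'(e)+c'(\rev{e})>\eps$, as required. There is no real obstacle here beyond being careful that $\round(y,z)>y$ is strict (it is, by the word ``greater'' in the definition rather than ``greater or equal''), which is exactly what delivers the strict inequality in the conclusion rather than a mere $\geq \eps$. One could also observe in passing that the same argument yields the matching upper bound $c'(e)+c'(\rev{e})\leq 2\eps$, though this is not needed for the stated fact.
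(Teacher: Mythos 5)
Your proposal is correct and follows the same argument as the paper: both use the strict inequality $\round(y,z)>y$ to get $c'(e)>c(e)+\eps/2$, apply it to $e$ and $\rev{e}$, and let $c(e)+c(\rev{e})=0$ cancel.
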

  \begin{proof}
    We have $c'(e)>c(e)+\eps/2$.
    Hence, $c'(e)+c'(\rev{e})>c(e)+\eps/2+c(\rev{e})+\eps/2=\eps$.
  \end{proof}

There is a subtle reason why we set $c'(e)$ to be $\round(c(e)+\eps/2,\eps/2)$
instead of $\round(c(e),\eps)$.
Namely, this allows us to obtain the following bound.

\begin{lemma}\label{l:cmonoton}
  For any $i=1,\ldots,q$ we have $c'(f_{i+1})-c'(f_i)<\Delta_i-|P_i|\cdot \eps.$
\end{lemma}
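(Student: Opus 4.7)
The plan is a direct edge-by-edge calculation that invokes Fact~\ref{f:revbound}. Augmenting a unit of flow along an edge $h\in P_i$ simultaneously changes $f(h)$ by $+1$ and $f(\rev{h})$ by $-1$, and since $c'$ (unlike $c$) is \emph{not} antisymmetric, each such augmentation contributes a term to $c'(f_{i+1})-c'(f_i)$ that depends on both $c'(h)$ and $c'(\rev{h})$. I would write out the per-edge contribution explicitly in the same antisymmetric bilinear style used to define $c(f)$, sum over $h\in P_i$, and subtract from $\Delta_i=\sum_{h\in P_i}c'(h)$. The terms involving only $c'(h)$ cancel, leaving an expression in which every edge of $P_i$ contributes a \emph{slack term} of the form $c'(h)+c'(\rev{h})$ to the discrepancy $\Delta_i-(c'(f_{i+1})-c'(f_i))$.

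I would then apply Fact~\ref{f:revbound} termwise: $c'(h)+c'(\rev{h})>\eps$ for every $h\in P_i$. Summing over all $|P_i|$ edges converts this pointwise strict inequality into $|P_i|\cdot\eps$ total slack in $\Delta_i-(c'(f_{i+1})-c'(f_i))$, and rearranging yields the claim $c'(f_{i+1})-c'(f_i)<\Delta_i-|P_i|\cdot\eps$. Importantly, strictness of the inequality in Fact~\ref{f:revbound} is what forces the strict bound in the lemma rather than a weak one.

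The main subtlety (rather than an obstacle) is the bookkeeping needed to justify the precise constant $|P_i|\cdot\eps$, as opposed to a smaller fraction of it. This is precisely where the specific rounding $c'(e)=\round(c(e)+\eps/2,\eps/2)$ introduced just before the lemma plays a role, as signaled by the motivating remark: rounding from $c(e)+\eps/2$ (rather than from $c(e)$) is exactly the tweak that ensures the pointwise slack captured by Fact~\ref{f:revbound} adds up to the full $|P_i|\cdot\eps$ after summing forward and reverse contributions along $P_i$. Once the expansion of $c'(f_{i+1})-c'(f_i)$ is written out and Fact~\ref{f:revbound} is applied termwise, the derivation is otherwise mechanical.
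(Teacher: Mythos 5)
Your plan is exactly the paper's proof: expand $c'(f_{i+1})-c'(f_i)=\tfrac{1}{2}\sum_{e\in P_i}\bigl(c'(e)-c'(\rev{e})\bigr)$ using the antisymmetry of the flow, and apply Fact~\ref{f:revbound} termwise to compare against $\Delta_i=\sum_{e\in P_i}c'(e)$. Incidentally, when the $\tfrac12$ in that bilinear expansion is tracked carefully, the per-edge slack is $\tfrac12\bigl(c'(e)+c'(\rev{e})\bigr)>\tfrac{\eps}{2}$, so both your sketch and the paper's written proof actually establish the slightly weaker $c'(f_{i+1})-c'(f_i)<\Delta_i-\tfrac12|P_i|\eps$; this constant is harmless downstream, where Lemma~\ref{l:sumpath} only needs the $O(m\log m)$ bound.
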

\begin{proof}
  We have
  $$c'(f_{i+1})-c'(f_i)=\frac{1}{2}\sum_{e\in E}(f_{i+1}(e)-f_i(e))c'(e)=\frac{1}{2}\left(\sum_{e\in P_i}c'(e)-\sum_{e\in P_i}c'(\rev{e})\right)=\frac{1}{2}\sum_{e\in P_i}(c'(e)-c'(\rev{e})).$$
  By Fact~\ref{f:revbound}, $-c'(\rev{e})<c'(e)-\eps$ for all $e\in E$.
  Hence
  $$c'(f_{i+1})-c'(f_i)< \sum_{e\in P_i}c'(e)-|P_i|\cdot \eps=\Delta_i-|P_i|\cdot\eps.\qedhere$$
\end{proof}
\begin{lemma}\label{l:diffbound}
  Let $f^*$ be any flow. Then
  $c(f_0)-c(f^*)\leq 2\eps m.$
\end{lemma}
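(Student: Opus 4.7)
The plan is to bound $c(f_0)-c(f^*)$ edge-by-edge over $E_0$. Recall that at the start of \textsc{Refine} the cost function was replaced by its $p_0$-reduced version $c(e)-p_0(u)+p_0(v)$; the hypothesis that $f_0$ is $2\eps$-optimal wrt.\ $p_0$ then rewrites, in the current notation, as $c(e)\geq -2\eps$ for every residual edge $e\in E_{f_0}$.

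I would start from the identity $c(f_0)-c(f^*)=\sum_{e\in E_0} c(e)\bigl(f_0(e)-f^*(e)\bigr)$ and argue that each summand is at most $2\eps$, which sums to $2\eps m$. The key observation is that under unit capacities, for every $e\in E_0$ at least one of $e,\rev{e}$ is residual: if $f_0(e)<1$ then $e\in E_{f_0}$, so $c(e)\geq -2\eps$; if $f_0(e)>0$ then $\rev{e}\in E_{f_0}$, so $-c(e)=c(\rev{e})\geq -2\eps$, i.e.\ $c(e)\leq 2\eps$.

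The conclusion follows by splitting on the sign of $f_0(e)-f^*(e)$. In the case $f_0(e)\leq f^*(e)$, either the summand is zero or $f_0(e)<1$ and hence $e\in E_{f_0}$, so $c(e)\geq -2\eps$ combined with $f_0(e)-f^*(e)\in[-1,0]$ gives $c(e)(f_0(e)-f^*(e))\leq 2\eps$. In the case $f_0(e)>f^*(e)$, we have $f_0(e)>0$ and thus $\rev{e}\in E_{f_0}$, so $c(e)\leq 2\eps$ combined with $f_0(e)-f^*(e)\in(0,1]$ again gives the bound $2\eps$.

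There is no real obstacle here; the content is just that the $p_0$-reduced costs of residual edges have absolute value at most $2\eps$, and that unit capacities force $|f_0(e)-f^*(e)|\leq 1$. The only care needed is to apply the correct residual inequality depending on the sign of $f_0(e)-f^*(e)$, which is exactly what the two-case split above does.
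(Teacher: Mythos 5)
Your proof is correct and follows essentially the same route as the paper: bound $c(f_0)-c(f^*)$ termwise, splitting on the sign of $f_0(e)-f^*(e)$ to decide whether $e$ or $\rev{e}$ is residual for $f_0$ and hence which of the two $2\eps$-optimality inequalities applies, then invoke unit capacities to get $|f_0(e)-f^*(e)|\leq 1$. The only cosmetic difference is that you sum over $E_0$ directly while the paper sums over $E$ with the $\tfrac{1}{2}$ factor; the per-edge bound and the case analysis are identical.
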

\begin{proof}
  We have
  $$c(f_0)-c(f^*)=\frac{1}{2}\sum_{e\in E} (f_0(e)-f^*(e))c(e).$$
  If $f_0(e)>f^*(e)$, then $\rev{e}\in E_{f_0}$ and hence $c(\rev{e})\geq -2\eps$, and
  thus $c(e)\leq 2\eps$.
  Otherwise, if $f_0(e)<f^*(e)$ then $e\in E_{f_0}$ and $c(e)\geq -2\eps$.

  In both cases $(f_0(e)-f^*(e))c(e)\leq 2\eps$.
  Therefore, since $|E|=2m$, $c(f_0)-c(f^*)\leq 2\eps m$.
\end{proof}
\begin{lemma}\label{l:apprbound}
  Let $f^*$ be any flow. Then
  $|c'(f^*)-c(f^*)|\leq \eps m.$
\end{lemma}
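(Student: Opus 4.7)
The plan is to bound $|c'(f^*) - c(f^*)|$ edge by edge, exploiting the definition of $c'$ as a small additive perturbation of $c$ together with unit capacities.

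First I would write $c'(f^*) - c(f^*)$ using the definition of the cost of a flow via the full edge set (as was used in the proof of \cref{l:cmonoton}):
\[
c'(f^*) - c(f^*) = \frac{1}{2}\sum_{e\in E} f^*(e)\bigl(c'(e)-c(e)\bigr).
\]
Then I would bound each of the two factors in every term.

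For the rounding error, I would unfold the definition $c'(e) = \round(c(e)+\eps/2,\eps/2)$. Since $\round(y,z)$ denotes the smallest integer multiple of $z$ strictly greater than $y$, we get $c(e)+\eps/2 < c'(e) \le c(e)+\eps$, and therefore $|c'(e)-c(e)|\le \eps$ for every $e\in E$. For the flow values, since $f^*$ obeys unit capacities we have $-u(\rev{e}) \le f^*(e)\le u(e)$ with $u(e),u(\rev{e})\in\{0,1\}$, hence $|f^*(e)|\le 1$ for every $e\in E$.

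Combining these with $|E|=2m$ gives
\[
|c'(f^*)-c(f^*)| \le \frac{1}{2}\sum_{e\in E}|f^*(e)|\cdot|c'(e)-c(e)| \le \frac{1}{2}\cdot 2m\cdot \eps = \eps m,
\]
which is the claimed bound. There is no real obstacle here: the argument is purely a termwise estimate, and the only care required is using the edge set $E = E_0 \cup \rev{E_0}$ of size $2m$ (consistent with the $\tfrac12\sum_{e\in E}$ convention for flow costs) together with the tight rounding interval $(\eps/2,\eps]$ produced by the offset $\eps/2$ in the definition of $c'$.
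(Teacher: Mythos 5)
Your proof is correct and matches the paper's argument almost verbatim: both expand $c'(f^*)-c(f^*)$ as $\frac{1}{2}\sum_{e\in E}f^*(e)(c'(e)-c(e))$, use $0<c'(e)-c(e)\le\eps$ from the rounding, and conclude with the triangle inequality over the $2m$ edges. The only cosmetic difference is that you spell out $|f^*(e)|\le 1$ from unit capacities, whereas the paper invokes the bound $|f^*(e)(c'(e)-c(e))|\le\eps$ implicitly.
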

\begin{proof}
  Recall that we had $0< c'(e)-c(e)\leq \eps$. Hence $|f^*(e)(c'(e)-c(e))|\leq \eps$ and:
  $$|c'(f^*)-c(f^*)|=\frac{1}{2}\left|\sum_{e\in E}f^*(e)(c'(e)-c(e))\right|\leq \frac{1}{2}\sum_{e\in E} |f^*(e)(c'(e)-c(e))|\leq \frac{1}{2}\sum_{e\in E}\eps=\eps m.\qedhere$$
\end{proof}
The inequalities from Lemmas~\ref{l:cmonoton}, \ref{l:diffbound}~and~\ref{l:apprbound}
combined give us the following important property of the
set of paths we augment along.

\begin{lemma}\label{l:sumpath}
  The total number of edges on all the paths we send flow through is $O(m\log{m})$.
\end{lemma}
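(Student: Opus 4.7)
The plan is to telescope the inequality of Lemma~\ref{l:cmonoton} over all $q$ one-path augmentations $P_1,\ldots,P_q$ from the analysis subsection, and then bound the two resulting sums separately.

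Summing Lemma~\ref{l:cmonoton} for $i=1,\ldots,q$ gives a telescoping inequality
\[
    c'(f_{q+1})-c'(f_1) \;<\; \sum_{i=1}^q \Delta_i \;-\; \eps\sum_{i=1}^q |P_i|,
\]
so that
\[
    \eps\sum_{i=1}^q |P_i| \;<\; \sum_{i=1}^q \Delta_i \;+\; \bigl(c'(f_1)-c'(f_{q+1})\bigr).
\]
The goal is now to show that both terms on the right-hand side are $O(\eps m\log m)$.

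For the second term, I would switch between $c'$ and $c$ using Lemma~\ref{l:apprbound}, which costs at most $\eps m$ on each side, and then bound $c(f_1)-c(f_{q+1})$. The point is that $f_1$ was obtained from $f_0$ by sending a unit of flow through every $e\in E_{f_0}$ with $c(e)<0$, which only decreases cost, so $c(f_1)\le c(f_0)$. Lemma~\ref{l:diffbound} applied to $f^*=f_{q+1}$ therefore gives $c(f_1)-c(f_{q+1})\le c(f_0)-c(f_{q+1})\le 2\eps m$. Combining with Lemma~\ref{l:apprbound} on both endpoints yields $c'(f_1)-c'(f_{q+1})\le 4\eps m$.

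For the first term, I would use Lemma~\ref{l:bound}, which states $\totexc_{f_i}\cdot \Delta_i\le 6\eps m$ whenever the process has not yet finished. Because each augmentation reduces the total excess by exactly one and $\totexc_{f_1}=q$, we have $\totexc_{f_i}=q-i+1$, hence $\Delta_i\le \tfrac{6\eps m}{q-i+1}$. Summing a harmonic series and using $q\le m$ gives
\[
    \sum_{i=1}^q \Delta_i \;\le\; 6\eps m\sum_{j=1}^q \frac{1}{j} \;=\; O(\eps m\log m).
\]
Plugging both bounds back into the telescoped inequality and dividing by $\eps$ yields $\sum_{i=1}^q |P_i|=O(m\log m)$, as claimed.

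The only genuinely delicate point is the first step of the second-term bound: Lemma~\ref{l:diffbound} is phrased in terms of $f_0$, not $f_1$, so we must verify that the additional augmentation producing $f_1$ cannot increase the cost; this is immediate from the definition of $f_1$. Apart from this observation, the argument is a routine combination of the three preceding lemmas.
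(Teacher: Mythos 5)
Your proposal is correct and follows essentially the same route as the paper: it telescopes Lemma~\ref{l:cmonoton}, bounds $c'(f_1)-c'(f_{q+1})$ via Lemma~\ref{l:apprbound}, $c(f_1)\le c(f_0)$, and Lemma~\ref{l:diffbound}, and bounds $\sum_i\Delta_i$ by a harmonic series via Lemma~\ref{l:bound}. The paper merely presents the same calculation by sandwiching $c(f_1)-c(f_{q+1})$ between an upper and a lower bound rather than rearranging the telescoped inequality, so the two arguments are algebraically identical.
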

\begin{proof}
  By Lemma~\ref{l:diffbound} and the fact that $c(f_1)\leq c(f_0)$, we have:
  $$c(f_1)-c(f_{q+1})\leq {c(f_0)-c(f_{q+1})}\leq 2\eps m.$$
  On the other hand, by Lemma~\ref{l:apprbound} and Lemma~\ref{l:cmonoton}, we obtain:
  \begin{align*}
  c(f_1)-c(f_{q+1}) & \geq (c'(f_1)-\eps m)+(-c'(f_{q+1})-\eps m)
                    =c'(f_1)-c'(f_{q+1})-2\eps m\\
                    &=\sum_{i=1}^q (c'(f_i)-c'(f_{i+1}))-2\eps m
                    \geq\sum_{i=1}^q (|P_i|\cdot\eps-\Delta_i)-2\eps m.
  \end{align*}
  By combining the two inequalities and applying Lemma~\ref{l:bound}, we get:
  $$\sum_{i=1}^q |P_i|\leq 4m +\sum_{i=1}^q\frac{\Delta_i}{\eps}\leq 4m+\sum_{i=1}^q \frac{6m}{\totexc_{f_i}}=4m+\sum_{i=1}^q \frac{6m}{q-i+1}=O(m\log{m}).\qedhere$$
\end{proof}

\section{Unit-Capacity Min-Cost Circulation in Planar Graphs}\label{s:planar}
In this section we show that the refinement algorithm per scale from Section~\ref{s:general} can be simulated
on a planar digraph more efficiently.
Specifically, we prove the following theorem.
\begin{theorem}
  \textsc{Refine} can be implemented on a planar graph in $\Ot((nm)^{2/3})$ time.
\end{theorem}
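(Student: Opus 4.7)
The plan is to simulate a single scale of \textsc{Refine} on a compressed representation of $G_f''$ built from an $r$-division with few holes, tuning $r=n^{2/3}/m^{1/3}$. Concretely, at the start of the scale I would compute, in nearly-linear time, an $r$-division of (the underlying plane undirected graph of) $G$ using the algorithm of~\cite{KleinMS13}; then, for each piece $P_i$ of this $r$-division I would form the distance clique $\dc(G_f'' \cap P_i)$ using \Cref{t:mssp}, and glue them together into a dense distance graph $\ddg_f''$. Two extra vertices $\source,\sink$ and their auxiliary edges (as in the definition of $G_f''$) are attached directly to $\ddg_f''$. Since the $r$-division has $O(n/r)$ pieces with $O(\sqrt{r})$ boundary vertices each, $\ddg_f''$ has $\sum_i |\bnd{G_i}|=O(n/\sqrt{r})$ vertices and $O(n)$ edges, and the initial construction costs $\Ot(m+n)$.

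The main loop then mirrors Algorithm~\ref{alg:refine} but runs on $\ddg_f''$ in place of $G_f''$. For each phase I would invoke FR-Dijkstra (\Cref{t:fr}) with the current feasible price function to obtain the vector $p=\dist_{\ddg_f'',\sink}$; this costs $\Ot(n/\sqrt{r})$ per call, and by \Cref{l:change} there are only $O(\sqrt{m})$ phases per scale, so shortest-path computations contribute $\Ot(\sqrt{m}\cdot n/\sqrt{r})$ in total. Because a shortest $\source\to\sink$ path in $\ddg_f''$ corresponds (piece by piece) to a shortest $\source\to\sink$ path in $G_f''$ of the same cost, reduced-cost-zero edges of $\ddg_f''$ encode exactly the admissible subgraph needed in line~\ref{l:augment} of Algorithm~\ref{alg:refine}, so the correctness argument of Section~\ref{s:general} carries over verbatim.

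The step I expect to be the main obstacle is implementing line~\ref{l:augment}: finding a maximal set of edge-disjoint $\source\to\sink$ admissible paths without spending $\Omega(m)$ per phase. Here I would adapt the reachability-in-dense-distance-graphs technique used in the dynamic planar reachability algorithms of~\cite{ItalianoKLS17, Karczmarz18}: each DDG edge, once a $0$-reduced-cost one is traversed, has to be ``uncompressed'' to an actual $G_f''$ path inside a piece by rerunning MSSP within that piece to extract witnesses, and its constituent edges are then saturated. Charging an $\Ot(n/\sqrt{r})$ cost per phase to the reachability structure maintenance (essentially one traversal of the boundary-vertex graph per phase), together with piece-local work bounded by the number of edges actually pushed along, yields $\Ot(\sqrt{m}\cdot n/\sqrt{r})$ total for augmentation search, matching the shortest-path bound.

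Finally, I would maintain $\ddg_f''$ under edge reversals. Whenever a path $P_i$ is augmented, each edge $e\in P_i$ lies in $O(1)$ pieces, and flipping $e$ in the residual graph only affects the distance cliques of these pieces; each such clique is recomputed from scratch in $\Ot(r)$ time via \Cref{t:mssp} using $p_i^*$ as a feasible price function. By \Cref{l:sumpath} the total length of all augmenting paths in a single scale is $O(m\log m)$, so the total DDG update cost is $\Ot(mr)$. Adding up, one scale runs in
\[
\Ot\!\left(\tfrac{\sqrt{m}\,n}{\sqrt{r}}+mr\right),
\]
which is minimized by $r=n^{2/3}/m^{1/3}$, yielding the claimed $\Ot((nm)^{2/3})$ bound per call to \textsc{Refine}.
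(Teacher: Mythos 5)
Your proposal follows the paper's approach closely: $r$-division with few holes, per-piece distance cliques glued into a dense distance graph, FR-Dijkstra for the price-function update, recomputation of affected cliques after each augmentation, and balancing the $\Ot(\sqrt{m}\,n/\sqrt{r})$ Dijkstra cost against the $\Ot(mr)$ clique-maintenance cost via $r=n^{2/3}/m^{1/3}$. All of that is correct and is essentially what Sections~\ref{s:planar-dijkstra} and~\ref{s:sendflow} do.

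The gap is in the step you yourself flag as the main obstacle. Saying you would ``adapt the reachability-in-DDG technique'' and ``charge $\Ot(n/\sqrt{r})$ per phase to reachability structure maintenance'' does not yet give an algorithm: the DDG on which you search for $0$-reduced-cost $\source\to\sink$ paths has $\Theta(n)$ edges (each distance clique contributes $\Theta(r)$ of them), so a naive DFS-style maximal-path search costs $\Theta(n)$ per phase, i.e.\ $\Theta(\sqrt{m}\,n)$ total, which destroys the bound. What actually makes the search cheap in the paper is the combination of two ingredients you have not supplied. First, an abstract ``path removal'' procedure (Section~\ref{s:pathcycle}) whose cost is split into a pop term charged once per vertex, a push term charged to the total length of returned paths, and an edge-scan term. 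The push term is fine by Lemma~\ref{l:sumpath}, and the pop term is $O(n/\sqrt{r})$ per phase; the issue is the edge-scan term, which is naively $O(\bar m)$. Second, and this is the decisive point, Lemma~\ref{l:reach} (the Monge-matrix/interval representation, following \cite{ItalianoKLS17,Karczmarz18}) shows that for each boundary vertex $v$ the set of $0$-reduced-cost out-neighbours within a piece is a union of $O(\log r)$ contiguous intervals in $O(\log r)$ ordered column sets, so the ``find next unscanned admissible out-edge avoiding $W$'' operation becomes an $O(\log\log n)$ successor query instead of a linear scan of clique edges. Only with this interval structure, plus the bookkeeping of sets $U_v$ and $U_{v,i}$ and the $next_{i,v,j}$ pointers, does the edge-scan term come down to $\Ot(\sqrt{m}\,n/\sqrt{r}+mr)$. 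Your proposal names the right citations but omits the Monge/interval mechanism that is doing all the work, and without it the claimed per-phase cost of $\Ot(n/\sqrt{r})$ for augmentation search is unjustified.

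Two smaller points. When you recompute $\dc(G_f''\cap \pc_i)$ after an augmentation, the old local price function $p_i$ need not remain feasible for the new residual piece graph; the paper's fix is that the global price function $p^*$ computed just before the augmentation is feasible for the post-augmentation $G_f''$ (Lemma~\ref{l:correct}), and Lemma~\ref{l:local} lets you extend $p^*$ from $\bnd{\pc_i}$ to all of $V(\pc_{f,i}'')$ in $O(r\log r)$ time to seed MSSP. Also, translating a $0$-reduced-cost path in the DDG back to an edge-disjoint path in $G_f''$ requires the observation that $Y_f''$ (the $0$-reduced-cost subgraph of the DDG) is acyclic and that distinct DDG edges on a common $Y_f''$ path expand to edge-disjoint subpaths of $G_f''$; this needs to be argued, since otherwise ``maximal edge-disjoint in $Y_f''$'' might not correspond to ``maximal edge-disjoint in $G_f''$''.
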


\newcommand{\sg}{\ensuremath{\overline{G}}}
\newcommand{\pcg}{\ensuremath{\overline{\pc}}}

Let $r\in [1,n]$ be a parameter. Suppose we are given
an $r$-division with few holes $\pc_1,\ldots,\pc_\lambda$
of $G$ such that for any $i$ we have $\lambda=O(n/r)$, $|V(\pc_i)|=O(r)$, $|\bnd{\pc_i}|=O(\sqrt{r})$,
$\bnd{\pc_i}$ lies on $O(1)$ faces of $\pc_i$,
and the pieces are edge-disjoint.
We set $\bnd{G}=\bigcup_{i=1}^\lambda \bnd{\pc_i}$. Clearly, $|\bnd{G}|=O(n/\sqrt{r})$.

In Appendix~\ref{s:rdiv}, we show that we can reduce our instance to the case when the above
assumptions are satisfied in nearly-linear time.

Since $m$ might be $\omega(n)$, we cannot really guarantee that $|E(\pc_i)|=O(r)$,
This will not be a problem though, since, as we will see, for all the computations
involving the edges of $\pc_i$ (e.g., computing shortest
paths in $\pc_i$, or sending a unit of flow through a path of $\pc_i$) of all edges $uv=e\in E(\pc_i)$ we will only care about an edge
$e\in E(\pc_i)\cap G_f$ with minimal cost $c'(e)$.
Therefore, since $\pc_i$ is planar, at any time only $O(r)$ edges of $\pc_i$ will be needed.

Recall that the per-scale algorithm for general graphs (Algorithm~\ref{alg:refine})
performed $O(\sqrt{m})$ phases, each consisting
of two steps: a shortest path computation (to compute the price function $p^*$ from Lemma~\ref{l:correct}),
followed by the computation of a maximal set of edge-disjoint
augmenting paths of reduced (wrt. $p^*$) cost $0$.
We will show how to implement both steps in $\Ot(n/\sqrt{r})$ amortized time,
at the additional total data structure maintenance cost (over all phases) of $\Ot(mr)$.
Since there are $O(\sqrt{m})$ steps, this will yield
$\Ot(nm)^{2/3})$
time by appropriately setting $r$.

We can maintain the flow $f$ explicitly, since it undergoes
only $O(m\log{n})$ edge updates (by Lemma~\ref{l:sumpath}).
However, we will not compute the entire price function $p^*$
at all times explicitly, as this is too costly.
Instead, we will only compute $p^*$ limited to the subset $\bnd{G}\cup \{\source,\sink\}$.

For each $\pc_i$, define $\pc_{f,i}'=G_f'\cap \pc_i$.
We also define $\pc_{f,i}''$ to be $\pc_{f,i}'$ with vertices $\{\source,\sink\}$ added,
and those edges $\source v$, $v\sink$ of $G_{f}''$
that satisfy $v\in V(\pc_i)\setminus \bnd{\pc_i}$.
This way, $\pc_{f,i}''\subseteq G_f''$ and $E(\pc_{f,i}'')\cap E(\pc_{f,j}'')=\emptyset$
for $i\neq j$.
The costs of edges $e\in E(\pc_{f,i}'')$
are the same as in $G_{f}''$, i.e., $c'(e)$.
Besides, for each $i$ we will store a ``local'' price function $p_i$ that
is feasible only for~$\pc_{f,i}''$,

After the algorithm finishes, we will know how the circulation looks like precisely.
However, the general scaling algorithm requires us to also output price
function $p$ such that $f$ is an $\eps$-optimal circulation wrt. $p$.
$f$ is $\eps$-optimal wrt. $p^*$ in the end, but we will
only have it computed for the vertices $\bnd{G}\cup\{\source,\sink\}$.
Therefore, we extend it to all remaining vertices of~$G$.

\begin{restatable}{lemma}{local}\label{l:local}
  Suppose we are given the values of $p^*$ on $\bnd{\pc_i}$ and
  a price function $p_i$ feasible for $\pc_{f,i}''$.
  Then we can compute
  the values $p^*(u)$ for all $v\in V(\pc_{f,i}'')$ in $O(r\log{r})$ time.
\end{restatable}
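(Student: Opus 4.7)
The plan is to reduce the problem to a single single-target Dijkstra on a slightly augmented copy of $\pc_{f,i}''$. The key structural observation is this: for any $v\in V(\pc_i)$, a shortest $v\to \sink$ path in $G_f''$ either stays entirely inside $\pc_{f,i}''$ (so its length equals $\dist_{\pc_{f,i}''}(v,\sink)$), or it exits the piece at some first boundary vertex $b\in\bnd{\pc_i}$, in which case its prefix $v\to b$ lies inside $\pc_{f,i}''$ and its suffix is a shortest $b\to\sink$ path of length $p^*(b)$ in $G_f''$. Consequently, $p^*(v) = \min\bigl(\dist_{\pc_{f,i}''}(v,\sink),\ \min_{b\in\bnd{\pc_i}}(\dist_{\pc_{f,i}''}(v,b) + p^*(b))\bigr)$.

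First I would realise this formula as a shortest-path computation by building an auxiliary graph $H$: take $\pc_{f,i}''$, add a fresh super-sink $\sink^*$, a zero-cost edge $\sink\to\sink^*$, and, for every $b\in\bnd{\pc_i}$, a shortcut edge $b\to\sink^*$ of cost $p^*(b)$. I would then verify $\dist_H(v,\sink^*)=p^*(v)$ for every $v\in V(\pc_i)$ by two inequalities: the $\leq$ direction is immediate from the displayed formula, while the $\geq$ direction follows because any $v\to\sink^*$ path in $H$ can be converted into an equal-length $v\to\sink$ walk in $G_f''$ by replacing its final shortcut $b\to\sink^*$ (or $\sink\to\sink^*$) with an actual shortest $b\to\sink$ path of cost $p^*(b)$.

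Second, to compute $\dist_H(\cdot,\sink^*)$ efficiently I would run a single-target Dijkstra using a feasible price function $p'$ on $H$ obtained by extending $p_i$ with
\[
p'(\sink^*):=\max\Bigl(p_i(\sink),\ \max_{b\in\bnd{\pc_i}}(p_i(b)-p^*(b))\Bigr),
\]
which by design makes every newly added edge have non-negative reduced cost while the existing edges of $\pc_{f,i}''$ remain feasible by the assumption on $p_i$. Since $\pc_i$ is planar, and, by the paper's remark, we only need to keep the cheapest representative among each bundle of parallel edges in $\pc_{f,i}''$, the graph $\pc_{f,i}''$ has $O(r)$ vertices and $O(r)$ relevant edges; adding $\sink^*$ and $O(\sqrt{r})$ shortcut edges preserves this bound, so Dijkstra runs in $O(r\log r)$.

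The main subtlety I expect is justifying that the $v\to b$ prefix of the considered shortest path always lives inside $\pc_{f,i}''$: two boundary vertices of $\pc_i$ may share an edge owned by a different piece and therefore absent from $\pc_{f,i}''$, but choosing $b$ as the \emph{first} vertex of $\bnd{\pc_i}\cup\{\sink\}$ on the path ensures that every intermediate vertex is strictly internal to $\pc_i$ and hence owns all of its incident edges, so every edge on the prefix belongs to $\pc_i$ and, being in $G_f'$, to $\pc_{f,i}''$.
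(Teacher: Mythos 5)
Your proof is correct and follows the same decomposition as the paper: split a shortest $v\to\sink$ path in $G_f''$ at its first boundary vertex (if any), reduce to shortest paths in a locally augmented copy of $\pc_{f,i}''$, and run a single Dijkstra with the help of a price function. The one place you diverge is the final Dijkstra step. The paper adds edges $b\to\sink$ of cost $p^*(b)$ to $\pc_{f,i}''$ directly, observes that $p_i$ is then feasible for all edges except the incoming edges of $\sink$, and invokes a variant of Dijkstra's algorithm due to Dinitz and Itzhak that tolerates one ``negative'' vertex, at cost $O(r\log r)$. You instead route the shortcuts into a fresh super-sink $\sink^*$ (with a zero-cost edge $\sink\to\sink^*$ as well) and extend $p_i$ to $\sink^*$ by setting $p'(\sink^*)=\max\bigl(p_i(\sink),\ \max_{b\in\bnd{\pc_i}}(p_i(b)-p^*(b))\bigr)$, which you correctly verify makes every new edge have non-negative reduced cost while leaving old reduced costs unchanged; plain Dijkstra then suffices. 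Your version is self-contained and avoids the extra citation, which is a genuine (if small) simplification; otherwise the two arguments are the same, including the subtle point you flag and correctly resolve --- that an interior vertex of a piece in an $r$-division owns all its incident edges, so the $v\to b$ prefix is entirely a path of $\pc_{f,i}''$.
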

\begin{proof}
  Recall that $p^*(v)=\dist_{G_f''}(v,\sink)$ for all $v\in V$.
  Since we already know the values of $p^*$ on $\bnd{\pc_i}\cup \{\source,\sink\}$,
  we only need to compute them for the vertices $V_i=V(\pc_{f,i}'')\setminus(\bnd{\pc_i}\cup \{\source,\sink\})$.
  Take some shortest $v\to \sink$ path $\pc_v$ in $G_f''$, where $v\in V_i$.
  $\pc_v$ either contains some first vertex $b\in \bnd{\pc_i}$, or
  is fully contained in $\pc_{f,i}''$.
  In the former case we have $p^*(v)=\dist_{G_f''}(v,\sink)=p^*(b)+\dist_{\pc_{f,i}''}(v,b)$,
  and in the latter $p^*(v)=\dist_{\pc_{f,i}''}(v,\sink)$.
  Hence, to compute distances from all $v\in V_i$ to $\sink$
  in $G_f''$, it is sufficient to compute shortest paths
  on the graph $\pc_{f,i}^*$, defined as $\pc_{f,i}''$ with edges $b\to\sink$ of cost $c'(b\sink)=p^*(b)$,
  for all $b\in \bnd{\pc_i}$, added.

  To do that efficiently, note that $p_i$ is an almost feasible price function
  of $\pc_{f,i}^*$: the only edges with possibly negative reduced cost wrt. $p_i$ are
  the incoming edges of $\sink$.
  Therefore, we can still compute the distances to $\sink$ in $\pc_{f,i}^*$ in $O(r\log{r})$ time
  using a variant of Dijkstra's algorithm \cite{DinitzI17} that allows $k$
  such ``negative'' vertices if we want to compute single-source shortest paths in $O(km\log{n})$ time.
\end{proof}

Hence, in order to extend $p^*$ to all vertices of $G$
once the final circulation is found, we apply Lemma~\ref{l:local}
to all pieces.
This takes $O\left(\frac{n}{r}\cdot r\log{r}\right)=O(n\log{n})$ time.
\subsection{Dijkstra Step}\label{s:planar-dijkstra}
Let us start with an implementation of the Dijkstra step computing the new price function~$p^*$.
First, for each piece $\pc_i$ we define the compressed version $H_{f,i}''$ of $\pc_{f,i}''$ as follows.
Let
$V(H_{f,i}'')=\bnd{\pc_i}\cup \{\source,\sink\}$.
The set of edges of $H_{f,i}''$ is formed by:
\begin{itemize}
  \item a distance clique $\dc(\pc_{f,i}'')$ between vertices $\bnd{\pc_i}$ in $\pc_{f,i}''$,
  \item for each $v\in \bnd{\pc_i}$, an edge $\source v$ of cost $\dist_{\pc_{f,i}''}(\source, v)$ if this distance is finite,
  \item for each $v\in \bnd{\pc_i}$, an edge $v\sink $ of cost $\dist_{\pc_{f,i}''}(v, \sink)$ if this distance is finite,
  \item an edge $\source\sink$ of cost $\dist_{\pc_{f,i}''}(\source, \sink)$ if this distance is finite.
\end{itemize}

Recall that we store a price function $p_i$ of $\pc_{f,i}''$.
Therefore, by Theorem~\ref{t:mssp}, $\dc(\pc_{f,i}'')$ can be computed
in $O(r\log{r})$ time.
All needed distances $\dist_{\pc_{f,i}''}(\source, v)$ and $\dist_{\pc_{f,i}''}(v, \sink)$ can be computed
in $O(r\log{r})$ time using Dijkstra's algorithm (again, with the help of price function $p_i$).

Now define $H_f''$ to be $\bigcup_{i=1}^\lambda H_{f,i}''$ with edges $\source v$ and $v\sink$
of $G_f''$ that satisfy $v\in \bnd{G}$ added.
\begin{fact}\label{f:translate}
  For any $u,v\in V(H_f'')$, $\dist_{H_f''}(u,v)=\dist_{G_f''}(u,v)$.
\end{fact}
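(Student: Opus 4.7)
The plan is to prove both inequalities $\dist_{H_f''}(u,v)\le \dist_{G_f''}(u,v)$ and $\dist_{H_f''}(u,v)\ge \dist_{G_f''}(u,v)$ separately, exploiting the fact that $H_f''$ is a ``compressed'' version of $G_f''$ in which each piece $\pc_{f,i}''$ is replaced by a distance clique on its boundary (augmented with shortcut edges through the piece that involve $\source$ and $\sink$).

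For the inequality $\dist_{H_f''}(u,v)\ge \dist_{G_f''}(u,v)$, I would argue that every edge of $H_f''$ can be ``expanded'' into a real walk in $G_f''$ of the same cost. By construction, every edge $xy$ of $H_f''$ arises from some piece $\pc_{f,i}''$ and has cost equal to $\dist_{\pc_{f,i}''}(x,y)$ (this covers the four bullet-point cases: the cliques $\dc(\pc_{f,i}'')$, and the $\source v$, $v\sink$, and $\source\sink$ edges through piece $i$). Since $\pc_{f,i}''\subseteq G_f''$, any $x\to y$ shortest path witnessing this distance is also a path in $G_f''$ of the same cost. Concatenating these path expansions along a shortest $u\to v$ path in $H_f''$ gives a $u\to v$ walk in $G_f''$ of equal total cost, proving the bound.

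For the reverse inequality $\dist_{H_f''}(u,v)\le \dist_{G_f''}(u,v)$, I would take any shortest $u\to v$ path $P$ in $G_f''$ and decompose it according to pieces. Because the pieces $\pc_{f,1}'',\ldots,\pc_{f,\lambda}''$ have pairwise edge-disjoint edge sets whose union is $E(G_f'')$, the edges of $P$ split uniquely into maximal contiguous runs, each lying entirely within a single piece $\pc_{f,i}''$. The crucial point is that any break point between two consecutive runs must be a vertex shared by two different pieces or be $\source$ or $\sink$, and therefore it lies in $\bnd{G}\cup\{\source,\sink\}=V(H_f'')$. For every maximal subpath from $x$ to $y$ inside piece $\pc_{f,i}''$, the cost of that subpath is at least $\dist_{\pc_{f,i}''}(x,y)$, which is exactly the cost of the corresponding edge $xy$ in $H_{f,i}''$ (whichever of the four construction cases applies, depending on whether $x,y$ are in $\bnd{\pc_i}$ or coincide with $\source,\sink$). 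Replacing each subpath by its corresponding single edge yields a $u\to v$ walk in $H_f''$ of cost at most $\dist_{G_f''}(u,v)$.

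The main subtlety, and the only step requiring care, is verifying that the decomposition of $P$ produces break points that always lie in $V(H_f'')$. This reduces to the edge-disjointness of the pieces together with the definition of boundary vertices as those shared between pieces, plus the observation that $\source$ and $\sink$ are present in every $H_{f,i}''$ in which they have incident edges. No heavy calculation is needed; the statement is essentially a formal verification that the dense distance graph construction preserves pairwise distances between its vertex set, and the proof is a clean two-sided argument along the lines above.
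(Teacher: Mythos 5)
Your two-sided argument is correct in outline and is the standard proof of the dense-distance-graph property; the paper states this fact without proof. There is, however, one factual inaccuracy in your decomposition that needs repair: the union $\bigcup_i E(\pc_{f,i}'')$ is \emph{not} $E(G_f'')$. By definition, $\pc_{f,i}''$ only receives the auxiliary edges $\source v$ and $v\sink$ for interior vertices $v\in V(\pc_i)\setminus\bnd{\pc_i}$; the auxiliary edges $\source v$ and $v\sink$ with $v\in\bnd{G}$ are added directly to $H_f''$ and belong to no piece $\pc_{f,i}''$. Consequently a shortest $u\to v$ path in $G_f''$ need not decompose into maximal runs each lying inside a piece -- it may also use one of these boundary-incident auxiliary edges. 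The fix is immediate: decompose $P$ into (a) maximal runs inside a single $\pc_{f,i}''$, with both endpoints in $\bnd{\pc_i}\cup\{\source,\sink\}$, each of which you replace by the corresponding edge of $H_{f,i}''$ of cost $\dist_{\pc_{f,i}''}(\cdot,\cdot)$, and (b) isolated auxiliary edges $\source v$ or $v\sink$ with $v\in\bnd{G}$, which you carry over verbatim since $H_f''$ contains them with the same cost. The same caveat applies to your forward direction, where you write that every edge of $H_f''$ ``arises from some piece'': the directly-added boundary auxiliary edges do not, but they are literal edges of $G_f''$ so they trivially expand to themselves. With these two cases made explicit, the argument is complete and correct.
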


Observe that $H_f''$ is a dense distance graph in terms of the definition of Section~\ref{s:prelims}:
it consists of $O(n/r)$ distance cliques $\dc(\pc_{f,i}'')$ with $O(\sqrt{r})$ vertices each,
and $O(n/\sqrt{r})$ additional edges which also can be interpreted as
$2$-vertex distance cliques.

Hence, given a feasible price function on $H_f''$,
we can compute distances to $\sink$ in $H_f''$ on it using Theorem~\ref{t:fr} in
$O\left(n/\sqrt{r}\frac{\log^2{n}}{\log^2\log{n}}\right)$ time.
Since $V(H_f'')=\bnd{G}\cup \{\source,\sink\}$, the price function $p^*$ we have is indeed sufficient.
The computed distances to $\sink$ form the new price function $p^*$
on $\bnd{G}\cup\{\source,\sink\}$ as in the algorithm for general graphs (see Algorithm~\ref{alg:refine}).

\subsection{Sending Flow Through a Path}\label{s:sendflow}
In the general case updating the flow after an augmenting path has been found was trivial.
However, as we operate on a compressed graph, the update procedure
has to be more involved.

Generally speaking, we will repeatedly find some shortest $\source\to\sink$ path $Q=e_1\ldots e_k$ in $H_f''$,
translate it to a shortest $\source\to\sink$ path $\pc$ in $G_f''$ and send flow
through it.
It is easy to see by the definition of $H_f''$ that $Q$ can be translated to a
shortest $\source\to \sink$ path in $G_f''$ and vice versa.
Each edge $e_j$ can be translated to either some subpath inside a single graph $\pc_{f,i}''$,
or an edge of $G_f''$ of the form $\source v$ or $v\sink$, where $v\in \bnd{G}$.
This can be
done in $O(r\log{n})$ time by running Dijkstra's algorithm
on $\pc_{f,i}''$ with price function~$p_i$.
We will guarantee that path $P$ obtained by concatenating
the translations of individual edges $e_j$ contains no repeated
edges of $G_f''$.

We now show how to update each $H_{f,i}''$ after sending flow through the found path $P$.
Note that we only need to update $H_{f,i}''$ if
$E(P)\cap E(\pc_{f,i}'')\neq \emptyset$.
In such case we call $\pc_i$ an \emph{affected piece}.
Observe that some piece can be affected
at most $O(m\log{m})$ times
since the total number of edges on all shortest augmenting paths $P$ in the entire algorithm,
regardless
of their choice, is $O(m\log{m})$ (see Lemma~\ref{l:sumpath}).

To rebuild $H_{f,i}''$ to take into account the flow augmentation we will need a feasible price function on $\pc_{f,i}''$
after the augmentation.
However, we cannot be sure that what we have, i.e., $p_i$,
will remain a good price function of $\pc_{f,i}''$ after the augmentation.
By Lemma~\ref{l:correct}, luckily, we know that
$p^*$
is a feasible price function after the augmentation
for the whole graph $G_{f}''$.
In particular, $p^*$ (before the augmentation) limited to $V(\pc_{f,i}'')$ is a feasible
price function of $\pc_{f,i}''$ after the augmentation.
Hence, we can compute new $p_i$ equal to $p^*$ using Lemma~\ref{l:local}
in $O(r\log{r})$ time.
Given a feasible price function $p_i$ on $\pc_{f,i}''$ after $f$ is augmented, we can recompute $H_{f,i}''$
in $O(r\log{r})$ time as discussed in Section~\ref{s:planar-dijkstra}.
We conclude that the total time needed to update the graph $H_f''$ subject to flow
augmentations is $O(mr\log{r}\log{m})=O(mr\log{n}\log{m})$.
\subsection{A Path Removal Algorithm}\label{s:pathcycle}
In this section we consider an abstract ``path removal'' problem,
that generalizes the problem of finding a maximal set of edge-disjoint
$\source\to\sink$ paths.
We will use it to reduce the problem of finding such a set of paths
on a subgraph of $G_f''$ consisting of edges with reduced cost $0$ wrt. $p^*$
to the problem of finding such a set of paths on
the zero-reduced cost subgraph of $H_f''$.

Suppose we have some directed acyclic graph $H$ with a fixed source $\source$
and sink $\sink$, that additionally undergoes some limited adversarial changes.
We are asked to efficiently support a number of \emph{rounds},
until $\sink$ ceases to be reachable from $\source$. Each round goes as follows.
\begin{enumerate}[label={(\arabic*)}]
\item We first find either any $\source\to\sink$ path $P$,
    or detect that no $\source\to\sink$ path exists.
\item Let $E^+\subseteq V\times V$, and $P\subseteq E^-\subseteq E(H)$ be some adversarial sets of edges. Let
    $H'=(V,E')$, where $E'=E(H)\setminus E^-\cup E^+$.
    Assume that for any $v\in V(H)$, if $v$ cannot reach $\sink$ in $H$,
    then $v$ cannot reach $\sink$ in $H'$ either.
    Then the adversarial change is to remove $E^-$ from $E$ and add $E^+$ to $E$, i.e., set $E(H)=E'$.
\end{enumerate}

Let $\bar{n}=|V(H)|$ and let $\bar{m}$ be the number of edges ever seen by the algorithm,
i.e., the sum of $|E(H)|$ and all $|E^+|$.
We will show an algorithm that finds all the paths $P$ in $O(\bar{n}+\bar{m})$ total time.
Let us also denote by $\bar{\ell}$ the sum of lengths of all returned paths $P$.
Clearly, $\bar{\ell}\leq \bar{m}$.

A procedure handling the phase (1) of each round, i.e., finding a $\source\to\sink$
path or detecting that there is none, is given in Algorithm~\ref{alg:maxpath}.
The second phase of each round simply modifies the representation of the graph $H$
accordingly.
Throughout all rounds, we store a set $W$ of vertices $w$ of $H$ for which
we have detected that there is no more $w\to \sink$ path in~$H$.
Initially, $W=\emptyset$.
Each edge $e\in E(H)$ can be \emph{scanned} or \emph{unscanned}.
Once $e$ is scanned, it remains scanned forever.
The adversarial edges $E^+$ that are inserted to $E(H)$ are initially unscanned.


\begin{algorithm}[h]
\small
\begin{algorithmic}[1]
  \Procedure{FindPath}{$H$}
  \State $Q:=$ an empty path with a single endpoint $\source$\Comment{$Q$ is an $\source\to \source$ path}

  \While{$\source\notin W$ and the other endpoint $y$ of $Q$ is not equal to $\sink$}\Comment{$Q$ is an $\source\to y$ path}
    \If{there exists an unscanned edge $yv=e\in E(H)$ such that $v\notin W$}\label{l:search}
      \State mark $e$ scanned
      \State $Q:=Qe$\label{l:append}
    \Else
      \State $W:=W\cup \{y\}$\label{l:winsert}
      \State remove the last edge of $Q$ unless $Q$ is empty\label{l:remove}
    \EndIf
  \EndWhile
  \If{$Q=\emptyset$}
    \State report $\sink$ not reachable from $\source$ and stop
  \Else
    \State \Return $Q$ and $Q:=0$.
  \EndIf
\EndProcedure
\end{algorithmic}
  \caption{Path-finding procedure. Returns a $\source\to\sink$ path in $H$ or detects that there is none.\label{alg:maxpath}}
  \end{algorithm}
The following lemmas establish the correctness and efficiency
of the crucial parts of Algorithm~\ref{alg:maxpath}.

\begin{restatable}{lemma}{blockingflow}
  Algorithm~\ref{alg:maxpath} correctly finds an $\source\to\sink$ path in $H$ or detects there is none.
\end{restatable}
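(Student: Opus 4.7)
The plan is to identify the right invariant relating the scanned-edge set, the ``dead'' set $W$, and the current path $Q$, and then derive both parts of the correctness claim (a valid $\source\to\sink$ path is returned, or ``no path'' is reported only when $\source$ truly cannot reach $\sink$) from it. Concretely, I will maintain:
\begin{itemize}
\item[(I1)] At the moment a \textsc{FindPath} call begins, every scanned edge $uv \in E(H)$ satisfies $v \in W$.
\item[(I2)] At every moment during a \textsc{FindPath} call, every scanned edge $uv \in E(H)$ with $u$ on the current path $Q$ satisfies either $v \in W$, or $v$ lies on $Q$ immediately after $u$.
\end{itemize}
The only ``privileged'' scanned edges (those whose head is not in $W$) are the edges of $Q$ itself; after $Q$ is returned and the adversary removes $E^-\supseteq P$, those edges disappear from $H$ and (I1) is restored for the next round.

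First I would verify (I1) and (I2) by induction. Initially $W = \emptyset$ and no edge is scanned, so (I1) is vacuous. For the inductive step \emph{within} a \textsc{FindPath} call I would track the two operations that can break (I2): appending a new scanned edge $yv$ to $Q$ at line~\ref{l:append} puts $v$ immediately after $y$ on $Q$, so the invariant holds for this new scanned edge, and it is preserved for all previously scanned edges since their heads' status does not change; backtracking at lines~\ref{l:winsert}--\ref{l:remove} inserts the old endpoint into $W$, which only strengthens the invariant. When a path is returned, the scanned edges on $Q$ are exactly the edges of $Q$, so after the adversary applies the modification $E^-\supseteq E(Q)$ those edges are gone and the remaining scanned edges still satisfy $v \in W$. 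New edges in $E^+$ are unscanned, and by hypothesis no vertex of $W$ regains the ability to reach $\sink$, so $W$ remains consistent and (I1) carries over to the next round.

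Next I would use (I2) to show that $W$ always contains only vertices that cannot reach $\sink$ in the current $H$, by induction on the order in which vertices enter $W$. When $y$ is added to $W$ at line~\ref{l:winsert}, $y$ is the endpoint of $Q$, so no vertex comes ``after $y$'' on $Q$; hence (I2) forces every scanned out-edge $yv \in E(H)$ to have $v \in W$. The conditional on line~\ref{l:search} being false forces every unscanned out-edge $yv \in E(H)$ to also satisfy $v \in W$. Thus every out-neighbor of $y$ in $H$ lies in $W$, and by the inductive hypothesis none of them reach $\sink$, so $y$ does not either. The two parts of the lemma now follow: if \textsc{FindPath} returns $Q$ ending at $\sink$, each edge of $Q$ was checked to lie in $E(H)$ at line~\ref{l:search} and $E(H)$ does not change during a single call, so $Q$ is a legitimate $\source\to\sink$ path; if \textsc{FindPath} reports failure, then $\source \in W$, and by the claim $\source$ cannot reach $\sink$ in $H$.

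The main obstacle I expect is handling the bookkeeping between rounds cleanly: scanned edges persist across rounds, and a priori a scanned edge from, say, $\source$ might lead to a vertex outside $W$ that is now usable for reaching $\sink$, which would make the DFS unsound. The hypothesis $P \subseteq E^-$ is what rescues us, since the only scanned edges whose heads are not in $W$ after a call are precisely the edges of the returned path, and those are forced to be deleted before the next round. Once this is observed, (I1) provides the clean interface between rounds and the rest of the argument is a straightforward DFS correctness proof.
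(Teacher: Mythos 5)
Your proposal is correct and follows the paper's approach closely: maintain a running invariant about scanned edges, deduce that $W$ contains only vertices that cannot reach $\sink$, and conclude that a failure report is justified while a returned $Q$ is a genuine $\source\to\sink$ path. One refinement worth noting: the paper's single invariant is the unrestricted statement ``every scanned edge $uv\in E(H)$ satisfies $uv\in Q$ or $v\in W$,'' which is slightly easier to preserve than your (I2). With (I2) restricted to edges whose tail $u$ lies on $Q$, the moment a vertex $v$ is first appended to $Q$ you acquire new obligations for scanned edges $vw$; your inductive-step sketch (``preserved for previously scanned edges since their heads' status does not change'') does not quite cover this, and you need the extra observation that such edges must predate the current call (vertices in $W$ are never re-appended, so $v$ was never on $Q$ earlier in this call), whence $w\in W$ by (I1). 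You also omit the termination argument, which the paper states up front: each loop iteration either scans a previously unscanned edge or inserts a fresh vertex into $W$, both monotone, so every call halts. These are small gaps in an otherwise accurate reconstruction of the paper's proof.
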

\begin{proof}
First note that no edge is appended to $Q$ twice throughout all rounds:
only unscanned edges are ever appended to $Q$ and
are marked scanned immediately afterwards.
Hence, the algorithm stops.

Since $H$ is acyclic, $Q$ remains simple at all times.
Moreover, for each scanned edge $uv=e\in E(H)$ we either have $e\in Q$
  or $v\in W$.

The next observation is that immediately after line~\ref{l:winsert} is executed,
  for all edges $yv\in E(H)$ we have $v\in W$.
By the previous observation,
for all edges $e=yv\in E(H)$, we have either $v\in W$
or $e\in Q$.
But after line~\ref{l:winsert} is executed, $y$ is the other endpoint of $Q$,
so if $e\in Q$, then $y$ also appears somewhere earlier in $Q$, i.e.,
$Q$ is not simple, a contradiction.

Next we prove that $W$ contains only vertices $v$
  that cannot reach $\sink$.
Consider the first moment when some vertex $v\in W$ can
actually reach $\sink$ in $H$.
If this is a result of changing the edge set, this means
that $v$ cannot reach $\sink$ in $H$, but can reach $\sink$
  in $(V,E')$. This, however, violates our assumption about $(V,E')$.
So $v$ is the first vertex that gets inserted
to $W$ in line~\ref{l:winsert}, but actually can reach $\sink$ in $H$ at this time.
In this case, for all edges $vw\in E(H)$, $w\in W$ and $w$ was inserted into $W$ before $v$.
Therefore, $v$ has only edges to vertices that cannot reach $t$, and thus it
cannot reach $\sink$ itself, a contradiction.

  Let us also note that for each edge $uv\in E(H)$, $u\in W$ implies
  that $v$ cannot reach $t$.
  Otherwise, $u$ could in fact reach $u$, which would contradict our
  previous claim.

Next we show that if a run of the procedure does not find a $\source\to\sink$ path,
it visits only vertices $v$ (i.e., $v\in V(Q)$
at some point of that run) reachable from $\source$, and out of
  those visits all that can reach $\sink$.
Clearly the procedure does not visit any $v$ not reachable from $\source$,
  as in that case we would have $v\in V(Q)$ at some point,
  but $Q$ is always a path starting at $\source$, i.e.,
  all vertices of $Q$ are reachable from $\source$.
  Now suppose the procedure does not visit some $v$ that
is reachable from $\source$ and can reach $\sink$, and choose $v$
to be such that $\dist_{H}(\source,v)$ is minimum.
Clearly, $v\neq \source$.
Let $w$ be such vertex that $\dist_{H}(\source,v)=\dist_{H}(\source,w)+1$ and $e=wv\in E(H)$.
  Observe that $e$ is unscanned, as otherwise we would either have $e\in Q$ (and thus $v$ would
  be visited) or $v\in W$ (and thus $v$ would not reach $\sink$).
  Note that $w$ is never inserted into $W$, since that would imply that $v$ cannot reach $\sink$.
  Since $w$ is reachable from $\source$, can reach $\sink$ (because it can reach $v$),
  and $\dist_{H}(\source,w)<\dist_{H}(\source,v)$,
  $w$ is visited by the procedure.
  But since the procedure does not terminate prematurely having found a $\source\to\sink$
  path $P$, the edge $e$, being unscanned, will be appended to $Q$ in step (a) when $Q$ is a $\source\to w$
  path. Hence, $v$ will be visited, a contradiction.

Finally, the procedure either finds a $\source\to\sink$ path, or
proves that $\sink$ is not reachable from $\source$.
\end{proof}
\begin{restatable}{lemma}{stepremove}\label{lem:stepremove}
  The total number of times line~\ref{l:remove} is executed, through all rounds, is $O(\bar{n})$.
\end{restatable}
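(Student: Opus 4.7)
The plan is to charge each execution of line~\ref{l:remove} to a unique insertion into the set $W$, and then use $|V(H)|=\bar{n}$ to bound the number of such insertions.

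First I would observe that line~\ref{l:remove} and line~\ref{l:winsert} are executed together: they form the body of the same \textbf{else} branch. Hence the number of actual edge removals from $Q$ is at most the number of insertions of the form $W:=W\cup\{y\}$ performed in line~\ref{l:winsert}. (In fact, the only case where these counts may differ is when $Q$ is empty, so $y=\source$ gets inserted while line~\ref{l:remove} does nothing; this only improves the bound.)

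The main step is then to argue that each vertex $v\in V(H)$ can enter $W$ at most once across all rounds. This is immediate from the code: $W$ is only ever grown, never shrunk, and the condition tested in line~\ref{l:search} for appending an edge $yv$ to $Q$ explicitly requires $v\notin W$. Combined with the fact that an edge is appended to $Q$ only when unscanned, and is marked scanned immediately after, this implies that once $v$ has been added to $W$, no edge into $v$ will ever be appended to $Q$ again, and so $v$ will never again become the other endpoint of $Q$. Consequently $v$ is never re-added to $W$ in line~\ref{l:winsert}.

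Therefore line~\ref{l:winsert} is executed at most $|V(H)|=\bar{n}$ times throughout all rounds, and hence line~\ref{l:remove} fires at most $\bar{n}$ times as well, giving the claimed $O(\bar{n})$ bound. The argument is essentially bookkeeping: the only subtle point, which I would state explicitly to avoid confusion, is that adversarial edge updates between rounds neither remove vertices from $W$ nor reset the scanned/unscanned status of pre-existing edges, so the above ``once-in-$W$'' invariant is preserved across rounds.
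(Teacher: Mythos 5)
Your proof is correct and follows essentially the same approach as the paper: charge each execution of line~\ref{l:remove} to the accompanying insertion of line~\ref{l:winsert}, then observe that each vertex enters $W$ at most once because $W$ only grows and line~\ref{l:search} forbids appending an edge into a vertex already in $W$, so such a vertex can never again become the other endpoint of $Q$. (Your mention of the scanned/unscanned mechanism is not actually needed for this particular lemma—the $v\notin W$ check alone suffices—but it is harmless.)
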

\begin{proof}
  Each execution of line~\ref{l:remove} is preceded by an insertion
  of some vertex to $W$.
  Each $v\in V(H)$ is inserted into $W$ at most once:
only the other endpoint of $Q$ can be inserted into $W$, and no
vertex of $W$ is ever appended to $Q$.
\end{proof}

\begin{restatable}{lemma}{stepappend}\label{lem:stepappend}
  Line~\ref{l:append} of Algorithm~\ref{alg:maxpath} is executed $O(\bar{n}+\bar{\ell})$ times through all rounds.
\end{restatable}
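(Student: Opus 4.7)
The plan is to use a simple accounting argument that relates the appends to $Q$, the removals from $Q$, and the final length of $Q$ within each individual round, then to sum over all rounds.

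First I would observe that $Q$ is (re)initialized to the empty path at $\source$ at the beginning of every call to \textsc{FindPath}, so no edge carries over between rounds. Fix a single round and let $a$ denote the number of times line~\ref{l:append} is executed during that round, and $r$ the number of times line~\ref{l:remove} is executed. Each execution of line~\ref{l:append} increases the length of $Q$ by one, and each execution of line~\ref{l:remove} decreases it by one (note that if $Q$ is already empty, line~\ref{l:remove} does nothing, but in that case $\source$ has just been added to $W$ and the \textbf{while}-loop immediately terminates, so that ``no-op'' invocation can be amortized away). Hence at the end of the round the length of $Q$ is exactly $a-r$.

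Next I would split into the two exit conditions. If the round ends by reporting that $\sink$ is unreachable, then $Q$ is empty, so $a = r$. If the round ends by returning a path $P$, then $|P| = a - r$, i.e.\ $a = |P| + r$. In either case $a \le |P_{\text{round}}| + r$, where we set $|P_{\text{round}}| = 0$ when no path is returned.

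Summing the bound $a \le |P_{\text{round}}| + r$ over all rounds, the total number of executions of line~\ref{l:append} is at most $\bar{\ell}$ plus the total number of executions of line~\ref{l:remove}. By Lemma~\ref{lem:stepremove} the latter is $O(\bar{n})$, yielding the desired $O(\bar{n}+\bar{\ell})$ bound. The only mildly delicate point is handling the corner case where line~\ref{l:remove} is triggered while $Q$ is already empty; this happens at most once per round (it forces $\source \in W$ and the loop exits), and such rounds are themselves charged against insertions into $W$, so they contribute only $O(\bar{n})$ additional work and do not affect the asymptotics.
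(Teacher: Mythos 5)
Your proof is correct and takes essentially the same accounting approach as the paper: both rely on Lemma~\ref{lem:stepremove} to bound the popped edges by $O(\bar{n})$ and charge the remaining appended edges to the returned paths, totaling $\bar{\ell}$. The paper phrases it as a per-edge dichotomy (each appended edge is either later popped or ends up on a returned path), while you phrase it as a per-round invariant $a = |P_{\text{round}}| + r$; these are the same observation, and your explicit handling of the empty-$Q$ corner case is a minor but reasonable addition.
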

\begin{proof}
  By Lemma~\ref{lem:stepremove}, $O(\bar{n})$ edges $e$ appended to $Q$
  that are later popped in line~\ref{l:remove}.
  If the appended edge is never popped in step~\ref{l:remove}, it is a part
  of a returned path or cycle -- this happens precisely $\bar{\ell}$ times.
\end{proof}

\begin{restatable}{lemma}{stepunscanned}\label{l:stepunscanned}
  The total time used by Algorithm~\ref{alg:maxpath}, through all rounds, is $O(\bar{n}+\bar{m})$.
\end{restatable}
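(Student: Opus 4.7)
The plan is to implement the search in line~\ref{l:search} using per-vertex edge iterators so that the amortized cost of the search across all rounds is $O(1)$ per edge that ever appears in $H$; combining this with the bounds from Lemmas~\ref{lem:stepappend} and~\ref{lem:stepremove} will yield the desired $O(\bar{n}+\bar{m})$ total.

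Concretely, I would equip each vertex $y \in V(H)$ with a pointer $\pi_y$ into its outgoing adjacency list, advancing only forward throughout the execution. When line~\ref{l:search} is invoked with $y$ as the current endpoint of $Q$, I would advance $\pi_y$ until it either (a) lands on an unscanned edge $yv$ with $v \notin W$, in which case $yv$ is marked scanned and the algorithm proceeds to line~\ref{l:append}, or (b) reaches the end of $y$'s adjacency list, in which case we execute line~\ref{l:winsert}. While advancing, the pointer skips over any already-scanned edge and also over any unscanned edge whose target currently lies in $W$; the latter are additionally marked scanned on the fly so that no subsequent round re-examines them. Adversarial edge insertions from $E^+$ are handled by appending to the tail of the relevant adjacency lists, which does not disrupt forward-moving pointers; deletions from $E^-$ are handled lazily by the iterator. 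Marking extra edges as scanned does not alter the algorithm's externally-observable behavior, as the condition in line~\ref{l:search} depends only on the existence of an unscanned edge with a non-$W$ target, and $W$ only grows.

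The running time then decomposes into three pieces. First, since every pointer advances only forward and each step either consumes one previously-examined position or marks one new edge scanned, the total work attributed to line~\ref{l:search} is $O(\bar{m})$, because $\bar{m}$ bounds the total number of edges ever present in $H$. Second, Lemma~\ref{lem:stepappend} bounds the number of executions of line~\ref{l:append} by $O(\bar{n}+\bar{\ell}) = O(\bar{n}+\bar{m})$, and third, Lemma~\ref{lem:stepremove} bounds the number of executions of line~\ref{l:remove} by $O(\bar{n})$. With standard bookkeeping (a Boolean flag per vertex indicating membership in $W$, a Boolean flag per edge indicating scanned status, and a doubly-linked-list representation of $Q$), each of these steps takes $O(1)$ time, so the total work is $O(\bar{n}+\bar{m})$ as claimed. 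The only point warranting care is verifying that the monotone pointers and on-the-fly scanning of $W$-targeted edges remain faithful to the semantics of line~\ref{l:search} in the presence of adversarial updates; but both issues reduce to the observations that every edge is examined at most once by its iterator and that once a target vertex enters $W$ it never leaves.
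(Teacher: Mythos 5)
Your proof is correct and follows essentially the same approach as the paper: represent $W$ as a bit array, invoke Lemmas~\ref{lem:stepremove} and~\ref{lem:stepappend} to bound lines~\ref{l:remove} and~\ref{l:append}, and argue that line~\ref{l:search} costs $O(\bar{m})$ total by maintaining a per-vertex adjacency structure that is traversed monotonically. The paper's own proof is terser (``store the outgoing edges in a linked list, move to a next unscanned edge in $O(1)$ time'') and leaves implicit what you spell out explicitly -- namely that edges whose target has entered $W$ must be permanently skipped (safe since $W$ only grows) and that $E^+$ edges go to the tail so forward pointers remain valid -- but these are the same ideas, not a different route.
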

\begin{proof}
  We represent $W$ as a bit array of size $\bar{n}$.
  Then, by Lemmas~\ref{lem:stepremove} and~\ref{lem:stepappend},
  to show that the algorithm runs in $O(\bar{n}+\bar{m})$ time, we only
  need to implement line~\ref{l:search} so that its all
  executions take $O(\bar{m})$ time in total.
  But this is easy:
  it is sufficient to store the outgoing edges of each vertex $v$
  in a linked list, so that adding/removing edges takes $O(1)$ time
  and we can move to a next unscanned edge in $O(1)$ time.
\end{proof}

\subsection{Finding a Maximal Set of Shortest Augmenting Paths}\label{s:planarpaths}

Recall that for a general graph, after computing the price function $p^*$
we found a maximal set of edge-disjoint $\source\to\sink$ paths
in the graph $Z_f''$, defined as a subgraph of $G_f''$ consisting
of edges with reduced cost $0$ (wrt. $p^*$).
To accomplish that, we could in fact use the path removal algorithm from Section~\ref{s:pathcycle} run on $Z_f''$:
until there was an $\source\to\sink$ path in $Z_f''$, we would find
such a path $P$,
remove edges of $P$ (i.e., set $E^-=P$ and $E^+=\emptyset$),
and repeat.
Since in this case we never add edges, the assumption that $\sink$
cannot become reachable from any $v$ due to updating $Z_f''$ is met.

Let $Y_f''$ be the subgraph of the graph $H_f''$ from Section~\ref{s:planar-dijkstra}
consisting of edges with reduced (wrt. $p^*$) cost $0$.
Since all edges of $H_f''$ correspond to shortest paths in $G_f''$,
all edges of $Y_f''$ correspond to paths in $G_f''$ with reduced cost $0$.
Because $Z_f''$ is acyclic by Lemma~\ref{l:correct}, $Y_f''$ is acyclic as well.
Moreover, for any two edges $e_1,e_2\in E(Y_f'')$, if there is a path going
through~both $e_1$ and $e_2$ in $Y_f''$, then the paths represented
by $e_1$ and $e_2$ are edge-disjoint in $Z_f''$ (as otherwise $Z_f''$
would have a cycle).
Therefore, any path $Q$ in $Y_f''$ translates to a \emph{simple} path
in $Z_f''\subseteq G_f''$.

We will now explain why running Algorithm~\ref{alg:maxpath} on $Y_f''$ can be
used to find a maximal set of edge-disjoint $\source\to\sink$ paths.
Indeed, by Fact~\ref{f:translate}, $Y_f''$ contains an $\source\to\sink$
path iff $Z_f''$ does.
Since $Y_f''$ is just a compressed version of $Z_f''$, and $Z_f''$ undergoes
edge deletions only (since we only remove the found paths),
the updates to $Y_f''$ cannot make some $\sink$ reachable from some new vertex $v\in V(Y_f'')$.
Technically speaking, we should think of $Y_f''$ as undergoing both edge insertions
and deletions: whenever some path $Q\subseteq Y_f''$ is found,
we include $Q$ in $E^-$ and send the flow through a path corresponding to $Q$ in $G_f''$,
as described in Section~\ref{s:sendflow}.
But then for all affected pieces $\pc_i$, $H_{f,i}''$ is recomputed and thus
some of the edges of $Q$ might be reinserted to $Y_f''$ again.
These edges should be seen as forming the set $E^+$,
whereas the old edges of the recomputed graphs $H_{f,i}''$ belong to $E^-$.
In terms of the notation of Section~\ref{s:pathcycle}, when running
Algorithm~\ref{alg:maxpath} on $Y_f''$, we have
$\bar{n}=O(n/\sqrt{r})$.
The sum of values $\bar{\ell}$ from Section~\ref{s:pathcycle} over all phases of the algorithm is,
by Lemma~\ref{l:sumpath}, $O(m\log{m})$.
Similarly, again by Lemma~\ref{l:sumpath}, the sum of the values $\bar{m}$ from Section~\ref{s:pathcycle} over all phases, is $O(m^{3/2}+mr^2\log{m})$
(since each time $E^+$ might be as large as $r^2$ times the number
of affected pieces).

Recall that there are $O(\sqrt{m})$ phases, and the total time needed to maintain the
graph $H_f''$ subject to flow augmentations is $O(mr\log{r}\log{m})$ (see Section~\ref{s:sendflow}).
For each phase,
running a Dijkstra step to compute $p^*$ using FR-Dijkstra, followed by running
Algorithm~\ref{alg:maxpath} directly until there are
no $\source\to\sink$ paths in $Y_f''$ would lead to
$O\left(\sqrt{m}\left(\frac{n}{\sqrt{r}}\frac{\log^2{n}}{\log^2\log{n}}\right)+m^{3/2}+mr^2\log{m}\right)$ total time,
i.e., would not yield any improvement over the general algorithm.
However, we can do better by implementing Algorithm~\ref{alg:maxpath}
on $Y_f''$ more efficiently.



\newcommand{\prowmap}{\ensuremath{\pi}}

\newcommand{\mmat}{\ensuremath{\mathcal{M}}}

The following lemma is essentially proved in~\cite{ItalianoKLS17, Karczmarz18}.
However, as we use different notation, we give a complete proof below.

\begin{restatable}[\cite{ItalianoKLS17, Karczmarz18}]{lemma}{lreach}\label{l:reach}
  Let $Z$ be the subgraph of $\pc_{f,i}'$ consisting of edges
  with reduced cost $0$ wrt. to some feasible price function $p$.
  There exists $O(\sqrt{r})$ pairs
  of subsets 
  $(A_{i,1},B_{i,1}),(A_{i,2},B_{i,2}),\ldots$ of $\bnd{\pc_i}$ such that
  for each $v\in \bnd{\pc_i}$:
  \begin{itemize}
    \item The number of sets $A_{i,j}$ ($B_{i,j}$) such that $v\in A_{i,j}$ ($v\in B_{i,j}$, resp.) is $O(\log{r})$.
    \item Each $B_{i,j}$ is totally ordered according to some order $\prec_{i,j}$.
    \item For any $j$ such that $v\in A_{i,j}$, there exist
      $l_{i,v,j},r_{i,v,j}\in B_{i,j}$ such that
      the subset $R_{i,v}$ of $\bnd{\pc_i}$ reachable from $v$ in $Z$ can be expressed
      as
      $\bigcup_{j:v\in A_{i,j}} \{w\in B_{i,j}: l_{i,v,j}\preceq_{i,j} w\preceq_{i,j} r_{i,v,j}\}.$
  \end{itemize}
  The sets $A_{i,j}$, $B_{i,j}$ and the vertices $l_{i,v,j},r_{i,v,j}$ for all
  $v,j$ can be computed in $O(\sqrt{r}\log{r})$ time based
  on the distance clique between $\bnd{\pc_i}$ in $\pc_{f,i}'$ and the values of $p^*$ on $\bnd{\pc_i}$.
\end{restatable}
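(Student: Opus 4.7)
The plan is to express the lemma as a structural statement about zero-reduced-cost reachability between boundary vertices and to prove it by combining the planar non-crossing property of shortest paths with a segment-tree decomposition of the boundary. First, I would observe that $Z$-reachability between two boundary vertices $u,w\in\bnd{\pc_i}$ is the same as the existence of a tight (i.e.\ reduced-cost $0$) path from $u$ to $w$ in $\pc_{f,i}'$, which in turn can be read off from the zero-reduced-cost edges of the distance clique $\dc(\pc_{f,i}')$. So the problem reduces to one on a digraph whose vertex set is $\bnd{\pc_i}$, of size $O(\sqrt{r})$, lying on $O(1)$ faces of $\pc_i$.

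Second, for each ordered pair of faces $(F,F')$ containing boundary vertices, I would fix a cyclic order on $F'$ and establish the key structural claim: for every $v\in F$, the set $R_v^{F'}:=R_{i,v}\cap F'$ forms a \emph{contiguous arc} in the cyclic order of $F'$. The argument is the standard non-crossing-shortest-paths one: if $a,b,c$ appear on $F'$ in this cyclic order with $v$ $Z$-reaching $a$ and $c$ but not $b$, choose tight $v\to a$ and $v\to c$ paths in $\pc_{f,i}'$ that do not cross (crossing tight paths can always be uncrossed by the classical endpoint-swap argument). Together with the arc of $F'$ avoiding $b$, these paths form a Jordan curve inside the piece; by planarity and the fact that $b$ lies in the bounded region delimited by this curve, one can derive a tight $v\to b$ path from the tight $v\to c$ path by rerouting along the boundary, contradicting $b\notin R_v^{F'}$.

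Third, once this ``interval'' property is established, I would invoke the standard segment-tree decomposition of intervals. Build, for every face $F'$, a segment tree over its cyclic order; this produces $O(|F'|)=O(\sqrt r)$ canonical arcs, each vertex of $F'$ belongs to $O(\log r)$ of them, and every cyclic interval decomposes canonically into $O(\log r)$ of them. For each canonical arc $B$ on $F'$ and each source face $F$, set $B_{i,j}:=B$ (ordered by position on $F'$) and $A_{i,j}:=\{v\in F:B\text{ appears in the canonical decomposition of }R_v^{F'}\}$, with $l_{i,v,j},r_{i,v,j}$ being the endpoints of $R_v^{F'}\cap B$. Summing over the $O(1)$ choices of $(F,F')$ gives $O(\sqrt r)$ pairs in total, with each $v$ appearing in $O(\log r)$ sets $A_{i,j}$, matching the statement.

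Fourth, for the running time, a naive computation would scan all $O(r)$ edges of $\dc(\pc_{f,i}')$ to extract the endpoints of each $R_v^{F'}$; to hit the claimed $O(\sqrt r\log r)$ bound I would leverage the Monge structure of the distance clique (as in \cite{FR06, GawrychowskiK18}) via persistent Monge heaps, which allow the per-vertex interval endpoints to be recovered in $O(\log r)$ amortized time without inspecting all pairwise distances. The main obstacle I expect is the topological justification of the contiguous-arc claim in the multi-hole setting: one has to be careful when $v$ and its targets lie on different faces of $\pc_i$, where the ``inside/outside'' Jordan-curve reasoning requires choosing the tight paths so as to cleanly separate $b$ inside the piece. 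The rest---segment-tree bookkeeping and the Monge-heap extraction---is standard once the structural claim is in place.
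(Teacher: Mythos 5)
Your route is genuinely different from the paper's. The paper does not go through any explicit topological (Jordan curve / uncrossing) argument, nor does it claim that the reachable set on a single face is a contiguous arc. Instead it takes as given the standard decomposition of the distance clique of a piece into $O(\sqrt r)$ rectangular Monge matrices (a result of the FR/MSSP machinery, cited to \cite{GawrychowskiK18, MozesW10}), each boundary vertex appearing as a row in $O(\log r)$ of them. Reducing each Monge matrix $\mmat_j$ by $p$ preserves the Monge property and makes all finite entries nonnegative, so the zero set in each row is exactly the set of row minima and is therefore a contiguous column interval $[l_{i,v,j},r_{i,v,j}]$. The sets $A_{i,j},B_{i,j}$ are the rows/columns that hit zero; SMAWK computes them and the endpoints in $O(|R_j|+|C_j|)$ per matrix, giving the $O(\sqrt r\log r)$ bound. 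The interval property is then proved by a short algebraic Monge argument (two applications of the quadrangle inequality), with no planarity in sight.

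Your proposal tries to re-derive a stronger structural statement from scratch, and this is where the gaps lie. First, the claim that $R_v^{F'}$ is a \emph{single} contiguous arc per ordered face pair is not established and is likely false in general: the multi-hole Monge decomposition in \cite{GawrychowskiK18, MozesW10} is precisely designed to cope with the fact that the raw distance matrix between two holes is not cyclic Monge, and that a single row's zero set need not be one arc. The paper's statement deliberately only guarantees a union of $O(\log r)$ intervals, one per Monge block containing $v$ as a row, and never asserts contiguity at the face level. Second, your Jordan-curve step is not sound as written: you choose the arc of $F'$ \emph{avoiding} $b$, but then assert $b$ lies in the bounded region, which does not follow (if anything $b$ lies on the other arc, hence outside or on the curve); and even if $b$ were enclosed, being enclosed by two tight paths and a boundary arc does not by itself yield a tight $v\to b$ path without a further length/feasibility argument. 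Third, the running-time plan (``persistent Monge heaps'') is gestured at rather than carried out; the paper gets $O(\sqrt r\log r)$ directly because it never materializes the $O(r)$ entries of the clique — it operates only on the implicit Monge blocks via SMAWK. The safest repair of your approach is to abandon the topological detour and simply invoke the Monge-block decomposition that is already available from the distance-clique construction, which is exactly what the paper does.
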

\begin{proof}
  The distance clique of $\bnd{\pc_i}$ in $\pc_{f,i}'$ can be partitioned
  into $O(\sqrt{r})$ rectangular Monge matrices
  $\mmat_1,\ldots,\mmat_q$, where $R_j$ and $C_j$ denote the sets
  of rows and columns, respectively, of $\mmat_i$, such that:
  \begin{itemize}
    \item these matrices have $O(\sqrt{r}\log{r})$ rows and columns in total,
    \item each $v\in \bnd{\pc_i}$ is a row of $O(\log{r})$ matrices $\mmat_j$,
    \item for all elements $\mmat_j[u,v]$, where $u\in R_j$, $v\in C_j$ we have $\mmat_j[u,v]\geq \dist_{\pc_{f,i}'}(u,v)$,
    \item for all $u,v\in \bnd{\pc_i}$ there exists such $\mmat_j$
      that $\mmat_j[u,v]=\dist_{\pc_{f,i}'}(u,v)$.
  \end{itemize}
  Recall that the Monge property here says that for any two rows $a,b\in R_j$
  and any two columns $x,y\in C_j$, such that $a$ is to the left of $b$,
  and $x$ is above $y$, we have $\mmat_j[a,x]+\mmat_j[b,y]\geq \mmat_j[a,y]+\mmat_j[b,x]$.

  The partition can be computed in $O(r\log{r})$ time when constructing the
  distance clique.
  The proof of the above can be found in \cite{GawrychowskiK18, MozesW10}.


  Denote by $\mmat_{j,p}$ the matrix with entries $\mmat_{j,p}[u,v]=\mmat_j[u,v]-p(u)+p(v)$.
  $\mmat_{j,p}$ is also Monge (see e.g., \cite{GawrychowskiK18}).
  Clearly, the non-infinite entries of each matrix $\mmat_{j,p}$ are non-negative,
  since $\dist_{\pc_{f,i}'}(u,v)-p(u)+p(v)\geq 0$ for all $u,v\in \bnd{\pc_i}$.
  For each $\mmat_{j,p}$ we find:
  \begin{itemize}
    \item the subset $B_{i,j}\subseteq C_j$ of its columns $b$ such that $\mmat_j[a,b]-p(a)+p(b)=0$ for some $a\in R_j$,
    \item the subset $A_{i,j}\subseteq R_j$ of its rows $a$ such that $\mmat_j[a,b]-p(a)+p(b)=0$ for some $b\in C_j$.
  \end{itemize}
  Both $A_{i,j}$ and $B_{i,j}$ can be found
  by finding row/column minima of $\mmat_{j,p}$ using SMAWK algorithm \cite{AggarwalKMSW87} in $O(|R_j|+|C_j|)$ time.
  Next, we again use SMAWK algorithm to find for each $a\in A_{i,j}$ the leftmost row minimum
  $\mmat_{j,p}[a,l_{i,a,j}]$
  and the rightmost row minimum $\mmat_{j,p}[a,r_{i,a,j}]$ of the row $a$ of $\mmat_{j,p}$.
  This takes $O(|A_{i,j}|+|C_j|)$ time as well.
  Set $\prec_{i,j}$ to be the order of columns in $\mmat_j$ restricted to $B_{i,j}$.

  For brevity, below set $A_j:=A_{i,j}$, $B_j:=B_{i,j}$, $l_{u,j}:=l_{i,u,j}$, $r_{u,j}:=r_{i,u,j}$ and $\prec_{j}:=\prec_{i,j}$.

  It is sufficient to show that for $u,v\in \bnd{\pc_i}$, a path $u\to v$ exists
  in $Z$ if and only if for some $j$, $u\in A_j$, $v\in B_j$ and
  $l_{u,j}\preceq_{j} v\preceq_{j} r_{u,j}$.
  Let us start with $\implies$ direction.
  There exists such $\mmat_j$ that $\dist_{\pc_{f,i}'}(u,v)=\mmat_j[u,v]$.
  Since $\dist_{\pc_{f,i}'}(u,v)-p(u)+p(v)=0$, we have $\mmat_{j,p}[u,v]=0$.
  Since $\mmat_{j,p}$ has non-negative entries, $u\in A_{j}$ and $v\in B_{j}$.
  By the definition of $l_{u,j}$ and $r_{u,j}$, $l_{u,j}\preceq_{j} v\preceq_{j} r_{u,j}$.

  Now suppose $u\in A_{j}$, $v\in B_{j}$ and $l_{u,j}\preceq_j v\preceq_j r_{u,j}$.
  Clearly $\mmat_{j,p}[u,v]\geq 0$. Suppose $\mmat_{j,p}[u,v]>0$. Then $l_{u,j}\prec_j v \prec_j r_{u,j}$.
  Since $v\in B_j$, there exists some row $x\neq u$ of $\mmat_{j,p}$ such that
  $\mmat_{j,p}[x,v]=0$.
  If the row $x$ is above $u$, by Monge property we have $0=\mmat_{j,p}[x,v]+\mmat_{j,p}[u,r_{u,j}]\geq \mmat_{j,p}[x,r_{u,j}]+\mmat_{j,p}[u,v]>0$,
  a contradiction.
  Similarly, if $x$ is below $u$,
  then $0=\mmat_{j,p}[u,l_{u,j}]+\mmat_{j,p}[x,v]\geq \mmat_{j,p}[u,v]+\mmat_{j,p}[x,l_{u,j}]>0$, a contradiction.
  So in fact $\mmat_{j,p}[u,v]=0$ and therefore a path $u\to v$ exists in $Z$.

  Clearly, the total size of sets $A_j,B_j$ is $O(\sqrt{r}\log{r})$ and
  the total time to find these subsets and all $l_{a,j},r_{a,j}$,
  is $O(\sqrt{r}\log{r})$, given the preprocessed matrices $\mmat_1,\ldots,\mmat_q$.
\end{proof}


%
%

Recall that in Section~\ref{s:pathcycle}, to bound the total running
time, it was enough to bound the total time spent on executing
lines~\ref{l:search}, \ref{l:append}~and~\ref{l:remove}.
We will show that using Lemma~\ref{l:reach}, in terms of the notation from Section~\ref{s:pathcycle}, we can
make the total time spent on executing line~\ref{l:search} only $\Ot(\bar{n}+\bar{\ell})$
instead of $O(\bar{m})$, at the cost of increasing the total time
of executing line~\ref{l:remove} to $\Ot(\bar{n})$.

Specifically, at the beginning of each phase we compute the data from Lemma~\ref{l:reach}
for all pieces $\pc_i$.
Since for all $i$ we have the distance cliques $\dc(\pc_{f,i}'')$ computed,
this takes $O\left(\frac{n}{r}\cdot \sqrt{r}\log{r}\right)=O(n/\sqrt{r}\log{n})$ time.
We will also recompute the information of Lemma~\ref{l:reach} for an affected piece $\pc_i$ after
$H_{f,i}''$ is recomputed.
As the total number of times some piece is affected is $O(m\log{m})$, this takes $O(m\sqrt{r}\log{r}\log{m})$ time
through all phases.

Whenever the data of Lemma~\ref{l:reach} is computed for some piece $\pc_i$,
for each pair $(A_{i,j},B_{i,j})$ we store $B_{i,j}\cap W$ in a dynamic predecessor/successor
data structure $D_{i,j}$, sorted by $\prec_{i,j}$.
For each $v\in\bnd{\pc_i}$ and $j$ such that $v\in A_{i,j}$ we store a vertex
$next_{i,v,j}$ initially equal to $l_{i,v,j}$.
It is easy to see that these auxiliary data structures can be constructed
in time linear in their size, i.e., $O(\sqrt{r}\log{r})$
time.
Hence, the total cost of computing them is
$O(\sqrt{m}n/\sqrt{r}\log{n}+m\sqrt{r}\log{r}\log{m})=O\left(\sqrt{m}\left(\frac{n}{\sqrt{r}}\frac{\log^2{n}}{\log^2\log{n}}\right)+mr\log{n}\log{m}\right)$.

Now, to implement line~\ref{l:remove}, when $y$ is inserted into $W$ we go through
all pieces $\pc_i$ such that $y\in \bnd{\pc_i}$ and all
$B_{i,j}$ such that $y\in B_{i,j}$.
For each such $(i,j)$, we remove $y$ from $D_{i,j}$ in $O(\log\log{n})$ time.
Recall that the sum of numbers of such pairs $(i,j)$ over all $v\in \bnd{G}$
is $O(\sum_{i=1}^\lambda |\bnd{\pc_i}|\log{r})=O(n/\sqrt{r}\log{n})$.
Hence, by Lemma~\ref{lem:stepremove} the total time spent on executing line~\ref{l:remove}
in a single phase is $O(n/\sqrt{r}\log{n}\log\log{n})$.

Finally, we implement line~\ref{l:search} as follows.
The unscanned edges of $Y_f''$ that are not between boundary vertices are handled
in a simple-minded way as in Lemma~\ref{l:stepunscanned}.
There are only $O(n/\sqrt{r})$ of those, so we can neglect them.
In order to be able to efficiently find some unscanned edge $yv$ such that $y,v\in \bnd{G}$
and $v\notin W$, we keep for any $v\in \bnd{G}$ a set $U_v$ of pieces
$\pc_i$ such that $v\in\bnd{\pc_i}$ and there may still be some
unscanned edges from $v$ to $w\in \bnd{\pc_i}$ in $H_{f,i}''$.
Similarly, for each $\pc_i\in U_v$ we maintain
a set $U_{v,i}$ of data structures $D_{i,j}$ such that $next_{i,v,j}\neq\nil$.
Whenever the data of Lemma~\ref{l:reach} is computed for $\pc_i$, $\pc_i$ is inserted back to $U_v$
for all $v\in \bnd{\pc_i}$, and the sets $U_{v,i}$ are recomputed with
no additional asymptotic overhead.
To find an unscanned edge $yv$, for each $\pc_i\in U_y$ we proceed as follows.
We attempt to find an unscanned edge $yv$ in $\pc_i$.
If we succeed or $U_y$ is empty, we stop. Otherwise
we remove $\pc_i$ from $U_y$ and repeat, i.e., try another $\pc_j\in U_y$, unless $U_y$
is empty.
To find an unscanned edge $yv$ from a piece $\pc_i$,
we similarly try to find an unscanned edge $yv$ in subsequent
data structures $D_{i,j}\in U_{v,i}$,
and remove the data structures for which we fail from $U_{v,i}$.
For a single data structure $D_{i,j}$,
we maintain an invariant that an edge $yw$, $w\in D_{i,j}$ has been scanned
iff $w\prec_{i,j} next_{i,v,j}$.
Hence, to find the next unscanned edge, we first find
$x\in D_{i,j}$ such that $next_{i,v,j}\preceq_{i,j} x$ and $x$ is smallest possible.
This can be done in $O(\log\log{n})$ time since $D_{i,j}$ is
a dynamic successor data structure.
If $x$ does not exist or $r_{i,v,j}\prec x$, then,
by Lemma~\ref{l:reach}, there are no more unscanned edges
$yw$ such that $w\in D_{i,j}$, and thus we remove $D_{i,j}$ from $U_{v,i}$.
Otherwise, we return an edge $yx$ and set $next_{i,v,j}$ to
be the successor of $x$ in $D_{i,j}$ (or possibly $next_{i,v,j}:=\nil$
if none exists), again in $O(\log\log{n})$ time.

Observe that all ``failed'' attempts to find an edge $yv$, where $v\in \bnd{G}$
can be charged to an insertion of some $\pc_i$ to $U_y$
or to an insertion of some $D_{i,j}$ to $U_{y,i}$.
The total number of such insertions is again $O\left(\sqrt{m}\frac{n}{\sqrt{r}}\log{n}+m\sqrt{r}\log{r}\log{m}\right)$.
A successful attempt, on the other hand, costs $O(\log\log{n})$ worst-case time.
Since line~\ref{l:search} is executed $O(\sqrt{m}n/\sqrt{r}+m\log{n})$ times
through all phases,
the total time spent on executing line~\ref{l:search} is again
$O\left(\sqrt{m}\left(\frac{n}{\sqrt{r}}\frac{\log^2{n}}{\log^2\log{n}}\right)+mr\log{n}\log{m}\right)$. By setting
$r=\frac{n^{2/3}}{m^{1/3}}\cdot\left(\frac{\log{n}}{\log{m}\cdot \log^2\log{n}}\right)^{2/3}$ we obtain the main result of this paper.

\begin{theorem}
The min-cost circulation in a planar multigraph can be found in
  \linebreak $O\left((nm)^{2/3}\cdot\frac{\log^{5/3}{n}\log^{1/3}{m}}{\log^{4/3}\log{n}}\cdot \log{(nC)}\right)$ time.
\end{theorem}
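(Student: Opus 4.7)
The plan is to simply combine the scaling framework from Algorithm~\ref{alg:scaling} with the per-scale implementation developed in Section~\ref{s:planar}. By Lemma~\ref{l:scale} the outer scaling loop makes $O(\log(nC))$ calls to \textsc{Refine}, so it suffices to bound the per-scale running time and then multiply by $\log(nC)$.

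To bound \textsc{Refine} on a planar graph I would tally three separate contributions, all amortized over the $O(\sqrt{m})$ phases guaranteed by Lemma~\ref{l:change}. First, the cost of maintaining the compressed graph $H_f''$ under flow augmentations: by Lemma~\ref{l:sumpath} the total length of augmenting paths per scale is $O(m\log m)$, so only $O(m\log m)$ pieces are affected in total, and each affected piece is rebuilt in $O(r\log r)$ time using Theorem~\ref{t:mssp} together with the local feasible price function extracted from $p^*$ via Lemma~\ref{l:local}. This contributes $O(mr\log r\log m)$ in total. Second, the per-phase Dijkstra step on $H_f''$ runs in $O\!\left(\tfrac{n}{\sqrt r}\cdot\tfrac{\log^2 n}{\log^2\log n}\right)$ time by Theorem~\ref{t:fr}, using the previous-phase distances as a feasible price function exactly as justified by Lemma~\ref{l:correct} and Lemma~\ref{l:incr}; over all $O(\sqrt m)$ phases this contributes $O\!\left(\sqrt m\cdot\tfrac{n}{\sqrt r}\cdot\tfrac{\log^2 n}{\log^2\log n}\right)$. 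Third, the cost of finding a maximal set of edge-disjoint shortest $\source\to\sink$ paths, implemented on $Y_f''$ via the path-removal algorithm of Section~\ref{s:pathcycle} accelerated by the reachability structure of Lemma~\ref{l:reach}: as argued at the end of Section~\ref{s:planarpaths}, lines~\ref{l:search} and \ref{l:remove} together take $O\!\left(\sqrt m\cdot\tfrac{n}{\sqrt r}\cdot\tfrac{\log^2 n}{\log^2\log n}+mr\log n\log m\right)$ time in total, and the appends of line~\ref{l:append} are charged against Lemma~\ref{l:sumpath}.

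Summing the three contributions and absorbing lower-order terms, one scale of \textsc{Refine} costs
\[
O\!\left(\sqrt m\cdot\frac{n}{\sqrt r}\cdot\frac{\log^2 n}{\log^2\log n}+mr\log n\log m\right).
\]
The remaining step is to optimize $r$. The two terms balance when $\sqrt m\cdot n/\sqrt r\cdot A = m r\cdot B$, where $A=\log^2 n/\log^2\log n$ and $B=\log n\log m$; solving gives $r=\left(n/\sqrt m\right)^{2/3}\cdot (A/B)^{2/3}=\frac{n^{2/3}}{m^{1/3}}\cdot\left(\frac{\log n}{\log m\cdot\log^2\log n}\right)^{2/3}$, exactly the value stated. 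Plugging this $r$ back yields a per-scale bound of $O\!\left((nm)^{2/3}\cdot\frac{\log^{5/3}n\,\log^{1/3}m}{\log^{4/3}\log n}\right)$, and multiplying by the $O(\log(nC))$ outer iterations gives the theorem.

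The hard part was not the final calculation but the groundwork already laid earlier: showing that the per-scale procedure can tolerate running on the compressed graph $H_f''$ rather than on $G_f''$ (Fact~\ref{f:translate} and the acyclicity argument justifying that paths in $Y_f''$ translate to simple edge-disjoint paths in $Z_f''$), proving the $O(\sqrt m)$ phase bound via Lemma~\ref{l:bound} and Lemma~\ref{l:change}, and bounding the total augmenting-path length (Lemma~\ref{l:sumpath}) so that the rebuild cost of $H_f''$ stays within $\Ot(mr)$. Given those ingredients, the proof is a bookkeeping exercise that reduces to the balancing of the two dominant terms above.
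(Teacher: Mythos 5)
Your proposal is correct and mirrors the paper's argument exactly: it identifies the same three cost centers (rebuilding $H_f''$ under flow augmentations via Lemma~\ref{l:sumpath}, the FR-Dijkstra step, and the accelerated path-removal procedure), arrives at the same per-scale bound of $O\bigl(\sqrt m\cdot\tfrac{n}{\sqrt r}\cdot\tfrac{\log^2 n}{\log^2\log n}+mr\log n\log m\bigr)$, and balances $r$ in the same way before multiplying by the $O(\log(nC))$ outer iterations.
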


\bibliography{../references2}

\newpage

\appendix

\section{Reducing to the Case with an $r$-division with Few Holes}\label{s:rdiv}

\begin{theorem}[\cite{KleinMS13}]\label{t:rdiv}
  Let $G$ be a simple triangulated connected plane graph with $n$ vertices.
  For any $r\in [1,n]$, an $r$-division with few holes of $G$
  can be computed in $O(n)$ time.
\end{theorem}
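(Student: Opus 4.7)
The plan is to prove the result by recursive separator-based decomposition combined with a bootstrapping step to achieve linear (rather than $O(n\log n)$) time. The starting point is a linear-time simple-cycle planar separator of Miller type: in any triangulated plane graph on $N$ vertices one can find in $O(N)$ time a simple cycle of length $O(\sqrt{N})$ whose interior and exterior each contain at most $2N/3$ of the vertices. The crucial feature for our setting is that the separator is a single cycle, which, when we cut along it, introduces at most one new hole to each of the two resulting subpieces.

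First I would analyze the straightforward recursive algorithm: recursively apply the cycle separator, passing the current piece's boundary vertices as additional weights so that the separator cycle balances them as well, and stop subdividing a piece once it reaches $O(r)$ vertices. Each split adds $O(\sqrt{N})$ new boundary vertices to each of the two subpieces of size $O(N)$ and at most one new hole per subpiece. A standard amortized argument then shows that at the end every piece has $O(\sqrt{r})$ boundary vertices lying on $O(1)$ faces, that the total boundary across the decomposition is $O(n/\sqrt{r})$, that there are $O(n/r)$ pieces, and that the total running time is $O(n\log(n/r))$.

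To remove the logarithmic factor I would apply a bootstrapping step. First, recursively compute an $r_1$-division with few holes for a larger parameter $r_1$ (for instance $r_1 = r^{3/2}$). Then, independently for each coarse piece of size $O(r_1)$ and with $O(1)$ holes, refine it to an $r$-division by the direct recursive algorithm run locally inside the piece. If $r_1$ is chosen so that refining a single piece of size $O(r_1)$ costs $O(r_1\log(r_1/r)) = O(r_1)$, the refinement step costs $O(n)$ in total across all coarse pieces. Iterating the bootstrap a bounded number of times, or terminating the outer recursion at a base case of size $O(\log n)$ (handled trivially), collapses the overall work to $O(n)$.

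The main obstacle is verifying that the few-holes invariant can be maintained throughout the recursion while simultaneously balancing both the vertex count of the piece and the number of boundary vertices already present. I would handle this with a weighted simple-cycle separator in which boundary vertices are given additional weight, so that the separator cycle balances both measures at once; combined with the fact that cutting along a single simple cycle can interact with the $O(1)$ preexisting holes in only $O(1)$ topologically distinct ways (so the number of holes per subpiece remains $O(1)$), this provides the required structural control. Once this invariant is established, the bound on the number of pieces, the $O(\sqrt{r})$ boundary-size bound, and the $O(n)$ running time follow by standard accounting, matching the Klein-Mozes-Sommer construction.
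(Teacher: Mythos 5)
There is nothing in the paper to compare your argument against: Theorem~\ref{t:rdiv} is imported verbatim from Klein--Mozes--Sommer \cite{KleinMS13} and used as a black box, so the paper offers no proof of it. Judged on its own, your sketch follows the broad outline of the known construction (recursive simple-cycle separators with weights on boundary vertices, then a speed-up to remove the logarithmic factor), but it has two concrete gaps. First, the few-holes invariant is not maintained ``automatically'' by the observation that a single cycle interacts with the existing $O(1)$ holes in $O(1)$ ways. Each cut can add one new hole to each subpiece, so along a recursion path of depth $\Theta(\log(n/r))$ the hole count accumulates; the argument only shows the increase per cut is bounded, not the cumulative count. The standard construction avoids this by alternating the separator criterion: when a piece has too many holes (or too many boundary vertices) one applies a cycle separator weighted so as to split the holes (resp.\ the boundary) roughly evenly, which drives their number back below a constant. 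Without such a dedicated hole-reducing and boundary-reducing step, the claimed invariant simply fails.

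Second, the bootstrapping arithmetic does not work as stated. With $r_1=r^{3/2}$, refining one coarse piece costs $O(r_1\log(r_1/r))=O(r_1\log r)$, not $O(r_1)$, so the refinement phase costs $O(n\log r)$ in total; a two-level (or any constant-level) bootstrap of this kind cannot collapse the $O(n\log(n/r))$ bound to $O(n)$, since the total work over the levels of any such scheme is still governed by $\log(n/r)$ separator levels each costing $\Theta(n)$. Obtaining genuinely linear time is precisely the technical contribution of \cite{KleinMS13}: they compute an entire recursive decomposition in which the separator computations on a piece are charged not to the full piece size at every level but to quantities that telescope (exploiting the already-computed coarser division and the bounded boundary/hole structure), and this requires a considerably more careful accounting than the one you give. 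As written, your proof establishes an $O(n\log(n/r))$-time construction (modulo the hole issue above), which would weaken the final bounds in the paper by a logarithmic factor, but it does not establish the $O(n)$ bound claimed in the statement.
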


Let $\sg$ be an undirected simple plane graph obtained from $G_0$ by subsequently,
(1) ignoring the directions of edges, (2) removing multiple edges (i.e., leaving
at most one, arbitrary edge $uv$ for any $\{u,v\}\subseteq V$), and (3) embedding $\sg$ into plane,
(4) triangulating the faces of $\sg$ using infinite-cost edges.
We will never send flow through ``dummy'' infinite-cost edges; we use them to guarantee some
useful topological properties of the pieces.

We build an $r$-division $\pcg_1,\ldots,\pcg_\lambda$ with few holes of $\sg$
using Theorem~\ref{t:rdiv}.
For each $i$ we have $|V(\pcg_i)|=O(r)$, $|E(\pcg_i)|=O(r)$ and $|\bnd{\pcg_i}|=O(\sqrt{r})$.
At this point each $uv=e\in E(\pcg)$ can be contained in many pieces.
We choose one piece $\pcg_{\{u,v\}}$ containing $e$ and make the cost of
$e$ infinite in all the others, effectively turning $e$ into a dummy edge
in those pieces.

Now we go back to our original graph $G$.
We obtain pieces $\pc_1,\ldots,\pc_\lambda\subseteq G$ as follows.
For each $uv=e\in E$, we make $e$ the edge of $\pc_i$ such that $\pcg_{\{u,v\}}=\pcg_i$
and make $e$ inherit the embedding of $uv\in E(\sg)$.
Similarly, for each of the dummy edges $uv=e'\in E(\sg)$, we direct it arbitrarily
and make it an edge of $\pc_i$ such that $e'\in E(\pcg_i)$.
We set $\bnd{\pc_i}=\bnd{\pcg_i}$.
This way, the properties of an $r$-division with few holes: (1) $|V(\pc_i)|=O(r)$, (2) $|\bnd{\pc_i}|=O(\sqrt{r})$,
and (3) $\bnd{\pc_i}$ lies on $O(1)$ faces of $\pc_i$,
are still satisfied.
The only difference is that now $|E(\pc_i)|$ might be $\omega(r)$.
In Section~\ref{s:planar} we have already justified that this is not a big problem, though.

\newcommand{\tin}{\text{in}}
\newcommand{\tout}{\text{out}}
\newcommand{\told}{\text{old}}
\newcommand{\indeg}{\text{indeg}}
\newcommand{\outdeg}{\text{outdeg}}
\newcommand{\mindeg}{\text{mindeg}}

\section{\textsc{Refine} in $O(m^{3/2})$ time}\label{s:dial}

So-called Dial's implementation of Dikjstra's algorithm~\cite{Dial69} can compute
the distances to a single sink $\sink$ to all vertices satisfying $\dist_{G}(v,\sink)\leq K\eps$
in $O(m+K)$ time, assuming all costs are non-negative integer multiples of $\eps$.
Hence, if we are given a possibly negatively-weighted graph with a price function $p$,
in $O(m+K)$ we can compute the distances to $\sink$ from all vertices such that
$\dist_{G}(v,\sink)-p(v)+p(\sink)\leq K\eps$.
Unfortunately, we cannot use Dial's algorithm directly, since the (reduced) distances to $\sink$
in $G_f''$ can be generally $\omega(m)$.

However, one can observe two things. First, by Lemma~\ref{l:bound}, $\Delta\leq 6\eps m$.
Moreover, in the implementation we do not need to use the price function $p^*$ from Lemma~\ref{l:correct},
which we do in line~\ref{l:augment}.
In fact, by Lemma~\ref{l:incr}, any feasible price function $p$ of $G_f''$ will do, provided
that $p(s)-p(t)=\Delta$.

We will maintain the invariant that $p(\sink)\leq 0$, and $p(\source)=0$.
Clearly, the invariant is satisfied initially.
In line~\ref{l:dijkstra}, we run Dijkstra's algorithm
with price function $p$
and stop it when it visits $\source$.
Since $p(\source)=0$, and Dijkstra's algorithm visits vertices $v$ in non-decreasing
order of values $\dist_{G_f''}(v,\sink)-p(v)+p(\sink)$, for any visited $v$ we have
$$\Delta=\dist_{G_f''}(\source,\sink)\geq \dist_{G_f''}(\source,\sink)+p(\sink)=\dist_{G_f''}(\source,\sink)-p(\source)+p(\sink)\geq \dist_{G_f''}(v,\sink)-p(v)+p(\sink).$$
Hence, indeed such a Dijkstra run can be performed in $O(m+\Delta/\eps)=O(m)$ time
using Dial's implementation.
Next, we set $p(v):=\dist_{G_f''}(v,\sink)-\dist_{G_f''}(\source,\sink)$ for
all visited $v$, whereas for the unvisited vertices $v$ we leave $p(v)$ unchanged.
Afterwards, we have $p(\source)=0$ and $p(\sink)=-\Delta<0$.

We need to verify that $p$ remains a feasible price function after it is altered.
Let $U$ be the set of visited vertices.
First note that before the substitution, for any $u\in U$ we have
$\dist_{G_f''}(u,\sink)-\dist_{G_f''}(\source,\sink)\leq p(u)-p(\source)=p(u)$.
Therefore, for each $u\in U$, its price cannot increase.
Now, let $uv=e\in E(G_f'')$.
If $\{u,v\}\subseteq U$, then the reduced cost of $e$ is non-negative, since
$p$ is a shifted distance-to function $\dist_{G_f'',t}$ on these vertices.
If $v\notin U$, then $c'_p(e)$ cannot decrease due to substitution,
and we had $c'_p(e)\geq 0$ before, so $c'_p(e)\geq 0$ afterwards as well.
Finally, suppose $u\notin U$ and $v\in U$.
Then, since $u$ was not visited before Dijkstra's run was terminated,
we had $(c'(e)-p(u)+p(v))+\dist_{G_f''}(v,\sink)-p(v)+p(\sink)\geq \dist_{G_f''}(\source,\sink)-p(\source)+p(\sink)$,
or equivalently $c'(e)-p(u)+(\dist_{G_f''}(v,\sink)-\dist_{G_f''}(\source,\sink))\geq -p(\source)=0$.
But $p(u)$ is not changed afterwards, and $p(v):=\dist_{G_f''}(v,\sink)-\dist_{G_f''}(\source,\sink)$,
so $c'_p(e)\geq 0$ afterwards as well.
So indeed $p$ remains feasible.

\end{document}